\DeclareMathOperator{\argmax}{argmax}
\DeclareMathOperator{\argmin}{argmin}
\DeclareMathOperator{\poly}{poly}
\DeclareMathOperator{\polylog}{polylog}
\newenvironment{claim-repeat}[1]{\begin{trivlist}
		\item[\hspace{\labelsep}{\indent\sc{Claim} \ref{#1} }]\em }%
	{\end{trivlist}}
\newenvironment{lemma-repeat}[1]{\begin{trivlist}
		\item[\hspace{\labelsep}{\indent\sc{Lemma} \ref{#1} }]\em }%
	{\end{trivlist}}
\newenvironment{theorem-repeat}[1]{\begin{trivlist}
		\item[\hspace{\labelsep}{\indent\sc{Theorem} \ref{#1} }]\em }%
	{\end{trivlist}}
\newcommand{\size}[1]{\ensuremath{\left|#1\right|}}
\newcommand{\set}[1]{\left\{ #1 \right\}}
\newcommand{\Gcal}{\mathcal{G}}
\newcommand{\Hcal}{\mathcal{H}}
\newcommand{\ceil}[1]{\left\lceil #1 \right\rceil}
\newcommand{\floor}[1]{\left\lfloor #1 \right\rfloor}
\newcommand{\RR}{\mathds{R}}
\newcommand{\NN}{\mathds{N}}
\newcommand{\gtmp}{G_{\mathrm{adv}}}
\newcommand{\gnew}{G_{\mathrm{new}}}
\newcommand{\gallowed}{\Gcal_{\mathrm{allowed}}}
\newcommand{\wold}{w_\mathrm{old}}
\newcommand{\wnew}{w_\mathrm{new}}
\DeclareMathOperator{\roundp}{\mathrm{round}}
\newcommand{\slow}{\ensuremath{\mathrm{slow}}}
\newcommand{\new}{\mathrm{new}}
\newcommand{\alggapreduce}{{\textnormal{\sf{GapReduce}}}}
  \providecommand\BibTeX{{%
    \normalfont B\kern-0.5em{\scshape i\kern-0.25em b}\kern-0.8em\TeX}}}
\title{On the Complexity of Load Balancing in Dynamic Networks} 
\author
{Seth Gilbert}
{Department of Computer Science, National University of Singapore, Sungapure}
{}
{}
{Supported, in part, by the MOE Tier 2 project “Beyond Worst-Case Analysis: A Tale of Distributed Algorithms” [MOE2018-T2-1-160]}
\author
{Uri Meir}
{School of Computer Science, Tel Aviv University, Israel}
{}
{}
{} 
\author
{Ami Paz}
{Faculty of Computer Science, Universit\"at Wien, Austria}
{}
{}
{We acknowledge the Austrian Science Fund (FWF) and netIDEE SCIENCE project P~33775-N}
\author
{Gregory Schwartzman}
{School of Information Science, Japan Advanced Institute of Science and Technology, Japan}
{}
{}
{Supported by JSPS KAKENHI grants 19K20216 and 21K17703} 
\authorrunning{S. Gilbert, U. Meir, A. Paz, and G. Schwartzman}
\keywords{Distributed graph algorithms, Dynamic networks, Smoothed analysis}
\begin{document}
	\maketitle
\begin{abstract}
    In the load balancing problem, each node in a network is assigned a load, and the goal is to equally distribute the loads among the nodes, by preforming local load exchanges.
    While load balancing was extensively studied in static networks, only recently a load balancing algorithm for dynamic networks with a \emph{bounded convergence time} was presented.
    In this paper, we further study the time complexity of load balancing in the context of dynamic networks. 

    First, we show that randomness is not necessary, and present a \emph{deterministic} algorithm which slightly improves the running time of the previous algorithm, at the price of not being matching-based.
    Then, we consider \emph{integral} loads, i.e., loads that cannot be split indefinitely, and prove that no matching-based algorithm can have a bounded convergence time for this case.
    
    To circumvent both this impossibility result, and a known one for the non-integral case, we apply the method of \emph{smoothed analysis}, where random perturbations are made over the worst-case choices of network topologies. We show both impossibility results do not hold under this kind of analysis, suggesting that load-balancing in real world systems might be faster than the lower bounds suggest.
\end{abstract}

\section{Introduction}

Load balancing is a fundamental problem in distributed systems: a collection of computational entities has a large amount of work that must be processed; the entities can allocate the work amongst themselves to share the load evenly (e.g., to more efficiently utilize the computational power of the network, or to minimize the time necessary to accomplish it all), using a communication network to transfer load. Load balancing arises in various settings such as multiprocessor systems~\cite{Cybenko89}, virtual machine placement in data centers~\cite{Hyser2007} and mobile peer to peer networks~\cite{BaranasuriyaGNR14, NandanDPGS05, CornejoGN12}.

When balancing load, a key question is how information regarding the load is exchanged and how the load itself is transferred. How do devices coordinate to decide where to move the load?  Which devices can exchange load at a given time?  The challenge addressed in this paper is accomplishing distributed load balancing, where devices can only communicate with their neighbors, in a \emph{dynamic network}, where the underlying communication network is constantly changing.
We consider two fundamental variants of the problem:
\emph{integral}, where the loads cannot be split indefinitely;
and \emph{continuous}, where loads can be split into arbitrarily small units.

\medskip
\noindent\textbf{Locality.} Our work focuses on \emph{local} load balancing: at any given time, each device has a set of neighbors that it can communicate with, i.e., there is a fixed communication graph.  A node can exchange information and work only with its immediate neighbors, i.e., the process is entirely decentralized.  This approach is very natural in distributed systems where no central entity exists. Local load balancing was first formalized and analyzed by Cybenko \cite{Cybenko89} and Boillat \cite{Boillat90}. It has since been studied in both static \cite{Cybenko89, Boillat90, HosseiniLMMV90, XuL93, XuL94} and dynamic graph topologies \cite{BahiCV03, BahiCV05, BahiCV07, ElsasserMS04, SiderC09}.

An interesting observation is that metadata related to load may be much smaller than the information required to handle the load itself (e.g., the information describing the tasks). 
Devices can rapidly exchange information with all their neighbors about the amount of load in the system, while transferring the load is more expensive. We therefore assume that time is divided into rounds, where devices can exchange \emph{information} with all their neighbors in a round (as in classical distributed-computing models such as the \textsc{congest} model), and can exchange \emph{load} with only $O(1)$ neighbors in a round.
This distinction between coordinating information and transferring load was described in~\cite{DinitzFGN17} in the context of radio networks\footnote{The idea that it is easy to exchange a small amount of information outside of the main communication channel was also discussed in~\cite{GhaffariN16}, though in a different context.}, where each node is allowed to exchange load with a single neighbor in each round (\emph{matching-based}). 
In the integral case, we consider \emph{matching-based} algorithms;
in the continuous case, we allow a node to exchange load with \emph{two neighbors} at each round.

%
%

\medskip
\noindent\textbf{Dynamic networks.} In this work, we focus on balancing load in \emph{dynamic} networks, i.e., networks where the communication graph changes in every round. This fact makes the problem significantly more challenging than in static networks. Consider, for example, a communication graph consisting merely of a line (a simple path). In a static graph, it is easy to distribute the load evenly among all the nodes in $O(n)$ time.  For example, if all the load starts at the left end of the path, then it is sufficient to simply pass the proper amount of load to the right. 
On the other hand, in a dynamic graph, the problem is more challenging.  For example, at every step the adversary may ``re-sort'' the graph so that the nodes are ordered from the highest amount of load to the lowest. 
Now any attempt to move load down the line is immediately undone by the reorganization of the nodes!
In fact, in~\cite{DinitzFGN17}, it was shown (using a slightly more complicated example) that any randomized matching-based load balancing algorithm would require $\Omega(n^2)$ rounds to achieve a good level of balance.

\medskip
\noindent\textbf{Balance.} The goal of our paper is to achieve uniform balance across all nodes in the network---while at the static setting it makes sense to consider \emph{local} balance, i.e., the requirement that any two neighbors have a similar load, in a dynamic network the set of neighbors changes and studying \emph{global} balance is inevitable.
In some cases, perfectly uniform balance may be stronger than necessary, and so we only require approximate balance. We say that an algorithm achieves $\tau$-convergence if the following holds: for every two nodes $u,v$, if $w(u)$ is the amount of load at $u$ and $w(v)$ is the amount of load at $v$, then $|w(u)-w(v)| \leq \tau$.

\medskip
\noindent\textbf{Smoothed analysis.} Smoothed analysis is a framework suggested for mediating gaps between worst-case and average-case complexities. The underlying motivation for this form of complexity analysis is that indeed the input for the algorithm (in our case a series of network topologies) can be chosen by a very strong (adaptive) adversary, but it is not \emph{all powerful}. 
That is, at every time step the adversary changes the network topology, but some noise is added to this topology in the form of a few random edges being added or deleted. 
The algorithm then receives this new, slightly perturbed topology as input rather than the original (adversarial) one.

\subsection{Results}
Dinitz et al.~\cite{DinitzFGN17} were the first to show that continuous load-balancing has a bounded convergence time in dynamic networks; here, we focus on the \emph{time complexity} of the load balancing problem in dynamic networks. That is, we aim to achieve a deeper theoretical understanding of this problem, and improved convergence time guarantees. 
To this end, we consider the following three fundamental questions: 
\begin{enumerate}
    \item \textbf{What is the role of \emph{randomness} in achieving a fast convergence time for the problem?} We show that when it is possible to exchange load with $O(1)$ neighbors, 
    there exists a \emph{deterministic} algorithm for the problem that is \emph{faster} than the randomized algorithm of~\cite{DinitzFGN17}.
    
    \item \textbf{Assuming \emph{integral} loads, can a bounded convergence time still be achieved?} We show that any matching-based load balancing algorithm \emph{cannot} achieve a bounded convergence time.
    
    \item \textbf{Does going \emph{beyond worst-case analysis} enables to achieve a bounded convergence time in the integral case?} We show that while the worst-case complexity of the problem is infinite, its \emph{smoothed complexity} is polynomial.
\end{enumerate}
\subsubsection{A deterministic algorithm}
We give the first \emph{deterministic} algorithm with bounded convergence time for load balancing in dynamic networks. 
Our staring point is the work of~\cite{DinitzFGN17}, which gave a randomized algorithm for this problem. Not only is our algorithm deterministic, but it also has a slightly improved running time compared to the previous, \emph{randomized} algorithm. To achieve this in our algorithms a node might exchange load with two neighbors in a round, rather than one neighbor as is the case in~\cite{DinitzFGN17}.
While for some purposes, it might be necessary that each node participates in a single exchange per round, for most applications a constant number of exchanges will do (e.g.,  multiprocessor systems, virtual machine placement in data center).
As far as we are aware, our algorithm currently gives the best known guarantees on the convergence time in dynamic networks;
this includes even algorithm that do not carry the guarantee that a node exchanges loads with~$O(1)$ neighbors at a round.

Our algorithm utilizes the natural \emph{max-neighbor} balancing strategy introduced in~\cite{DinitzFGN17}, and extends upon it.
The basic idea is that in every communication round, every node attempts to connect with a neighbor that maximizes the difference in loads. For example, if node $v$ has load $w(v)$, then it tries to connect to a neighbor $u$ such that $|w(v)-w(u)|$ is \emph{maximized}.
Each node needs to find $O(1)$ neighbors to exchange load with, and this is accomplished via a simple proposal protocol: each node sends a request to establish a connection to its neighbor with the maximum difference in loads; a node that receives a request can then accept the best option and establish a connection. After such a connection is established, the two nodes balance their loads equally between them.

This algorithm, however, needs to overcome a symmetry breaking problem---a node can either send a request to a neighbor, or it can consider requests received from neighbors; it cannot do both. Otherwise, the load that it is proposing to send to its neighbor may be instead taken by the proposal it decides to accept. 
The authors of~\cite{DinitzFGN17} break this symmetry by having nodes flip a coin at every round in order to decide upon their role in this algorithm. 

In this paper, we show that actually, a node can play both roles in the algorithm \emph{simultaneously}, by ``pretending'' to be two nodes, one which only sends requests and one which only receives them. Each of the pretend nodes is given half the load for it to balance. This approach is clean and simple, removes the unnecessary randomness from the algorithm, simplifies the analysis, and even slightly improves the running time of the max-neighbor algorithm.

Consider a network consisting of $n$ nodes, where $T$ is the total amount of load in the network. Then the system achieves $\tau$-convergence (i.e., the maximum difference in load between any two nodes is at most $\tau$) in time:
\[O\left(\min \left\{n^2 \log{\frac{Tn}{\tau}}, \frac{Tn}{\tau}\right\}\right)\]
improving the results of \cite{DinitzFGN17} by a $\log n$ factor. 

\subsubsection{Integral load balancing}
When the loads model computational tasks, it is very natural to think of a basic atomic unit of load that cannot be split. To this end, we consider the same load balancing problem, with the added constraint that all loads are initially integers, and must remain so throughout the balancing process. 
This small change immediately makes the problem immensely harder. 
In the continuous case, for every graph topology and every allocation of loads that is not balanced, it is always possible to make some progress; however, this is not the case when the loads must be integral.
Consider, for example, the case of a path with $n$ nodes where the leftmost node has load $1$, the node to its right has load $2$, and so on up to load $n$ at the last node. Due to the integrality constraints, no balancing can happen between two neighboring nodes. Thus, an algorithm which only balances loads among neighbors cannot make any progress from this configuration. 

In the above example, a more clever algorithm might shift loads in a more global manner, and not only according to each node's local view.
However, this cannot be done in general:
in Section~\ref{sec: integer} we show that \emph{no} matching-based algorithm can achieve a bounded convergence time for the integral case.


As mentioned, the integral case is very natural in real-world scenarios of load balancing. 
Thus, it is somewhat disheartening that in theory we cannot guarantee any bounded convergence while load balancing systems exist and function well in the real world.
To bridge this gap between theory and practice we invoke the toolbox of smoothed analysis. Smoothed analysis was first introduced to the distributed setting in the work of Dinitz et al.~\cite{DFGN18}, who considered distributed dynamic networks. This work was later extended~\cite{MPS20} to consider different models of noise applied to the network. 
In this paper we use the mode of fractional noise with an adaptive adversary, as introduced in~\cite{MPS20}.
Roughly speaking, at each time the adversary chooses a communication graph, $\Theta(k)$ edges are being added to the graph or removed from it in a randomized manner, 
according to some \emph{noise parameter} $k\in\RR_{+}$.

Technically, the perturbation ($k$-smoothing) is done by first rounding $k$ up with probability $k - \floor{k}$, and down otherwise. 
Denote this value by $k'$.
We then choose uniformly at random a graph within Hamming distance at most $k'$ of $G_i$
(i.e., a graph that is different from $G_i$ by at most $k'$ edges), which also has some desired properties (connectivity in our case).
Note that in the typical case $\Omega (k)$ edges are perturbed, as most of the graphs with Hamming distance at most $k'$ from $G_i$ have Hamming distance $\Omega (k)$ from it. 
The fractional nature of $k$ does not change the model significantly compared to integer values of $k$ when $k>1$. 
The case $0<k<1$ is more interesting, and allows us to analyze more robust networks, where random edges do not necessarily appear in each round, but instead a single random edge appears once every $1/k$ rounds in expectation.

We present a deterministic, matching-based, load balancing algorithm that for \emph{any} $k>0$ achieves a convergence time of \[O\left( \frac{n^2}{k} \log \left( \frac{T}{\tau} \right) \log \left( n \log T \right) \right)\]
with high probability.
That is, convergence is guaranteed even with the faintest amount of noise in the system, and as $k$ increases the convergence time decreases.

A simple reduction shows that the same guarantee holds for the simpler, continuous case.
For both cases, the convergence time can get \emph{substantially lower} than possible in worst-case analysis. 
For example, when $T=\poly(n)$ there is a noise parameter satisfying  $k=\polylog(n)$, guaranteeing a convergence time of  $O(n^2/\polylog(n))=o(n^2)$, which beats the worst-case lower bound of~\cite{DinitzFGN17} for the continuous case. 
That is, in the range of $k\in[0,\polylog(n)]$, the smoothed complexity of the integral load balancing problem goes from infinity to below the best possible (in worst-case analysis) for the continuous case.

\subsection{Related work}
The rigorous study of load balancing in distributed systems was initiated by Cybenko~\cite{Cybenko89}, who studied load balancing in static networks using a diffusion strategy, where nodes average the load among neighbors.
However, he also noted that this strategy is infeasible in systems where nodes cannot balance their load (i.e., exchange work) with many neighbors in parallel. 
In static networks with hypercube topology, he suggested an improved algorithm, requiring each node to exchange load only with a single neighbor at a time.
Boillat~\cite{Boillat90} used Poisson heat equation to present a different load balancing algorithm, which he showed to converge in $O(n^2)$ steps in static networks, using Markov chains.
Following these two seminal works, a lot of research effort was put into studying load balancing in static networks; 
for example, Cybenko's technique was analyzed in specific network topologies~\cite{XuL94,XuL92,XuL93,HosseiniLMMV90}, 
and Feuilloley et al.~\cite{FeuilloleyHS15} studied load balancing with local guarantees.

Bahi et al.~\cite{BahiCV03,BahiCV05,BahiCV07} and Els\"asser et al.~\cite{ElsasserMS04} were the first to consider load balancing in dynamic networks. 
These works, alongside with a more application-oriented work of Sider and Couturier~\cite{SiderC09}, were the base for the work of Dinitz et al.~\cite{DinitzFGN17}, upon which we elaborate.
Dinitz et al.~\cite{DinitzFGN17} consider load balancing in a setting where a node can broadcast small pieces of information, such as its current load, to all of its neighbors, but can exchange load only with a single neighbor in each round (henceforth, matching-based). 
They present a randomized load balancing algorithm in this setting, which completes in $
O(\min \{ n^2 \log{\frac{Tn}{\tau}}, \frac{Tn}{\tau} \} \cdot \log n)
$ rounds with high probability (where $T$ and $\tau$ are the total load in the system and a slack parameters as defined above).
They also show that in this model, $\Omega(n^2)$ rounds are necessary for any matching-based load balancing algorithm.
Berenbrink et al.~\cite{BerenbrinkFKK19} recently analyzed load balancing in a random environment, using an algorithm different than ours.
Parts of their work might be applicable to our smoothed model, where an adversary is also present.
This might imply faster balancing can be achieved in certain cases, but their results do not apply for perfect balancing, for which case we believe our running time to be tight.

Our work continues a long line of research where \emph{a network of processors} performs the load balancing: tasks are transferred between the processors, in order to balance the amount of tasks each processor handles~\cite{PelegU86}.
A different line of work studies load balancing in \emph{client-server networks}: here, servers communicate with the clients, in order to assign clients to servers in a  balanced way.
While the scenario in our work is modeled by an arbitrary communication network, load balancing in client-server networks is modeled using bipartite graphs, and admits very different algorithmic results, both in the centralized setting~\cite{HarveyLLT06,Horn73,BrunoCS74,GalcikKS17,FakcharoenpholLN14} and in the distributed setting~\cite{AssadiBL20,BrandtKRSU20,HalldorssonKPR18,CzygrinowHSW12}.


Another related problem is that of \emph{approximate consensus}~\cite{Degroot74,DolevEA86}: here, each node has to produce an output value such that all outputs lie in a small range, and are between the minimum and maximum original values.
A classical approach for solving this problem is \emph{local averaging}~\cite{VicsekEA95}: in each round, each node adopts the average of all its neighbors values. 
Note that such an algorithm cannot be used for load balancing: not only that a node exchanges ``load'' with many neighbors at once, but more importantly, the total load is not preserved.
To see this, consider a star with an initial load  of $1$ on each leaf and $0$ on the central node.
After one round of averaging, the load on each leaf drops to $1/2$, and the total load  drops to roughly half the original one.

\emph{Smoothed analysis} was introduced by Spielman and Teng~\cite{SpielmanT09,SpielmanT04} in relation to using pivoting rules in the simplex algorithm. Since, it have received much attention in sequential algorithm design; see, e.g., the survey in~\cite{SpielmanT09}. The first application of smoothed analysis to the distributed setting is due to Dinitz et al.~\cite{DFGN18}, who apply it to the well studied problems of aggregation, flooding and hitting time in dynamic networks. In \cite{MPS20} various other models of noise were introduced for smoothing in dynamic networks, where the behavior of the flooding problem was analyzed for these new models.

\medskip
\noindent\textbf{Recent developments.}
In a recent manuscript done in parallel with our work (brief announcement in SSS 2020), Dinitz, Dolev and Kumar~\cite{DinitzD020}%
\footnote{%
The group of (Yefim) Dinitz, Dolev and Kumar~\cite{DinitzD020} only coincidentally shares an abbreviation with the group of (Michael) Dinitz, Fineman, Gilbert, and Newport, who wrote~\cite{DinitzFGN17} and~\cite{DFGN18};
to avoid confusion, we use ``Dinitz et al.'' solely for the latter.}
consider essentially the same algorithm we present in Section~\ref{sec: det-alg} for the continuous case, 
but only for static networks.
Their analysis gives a convergence time of $O\left(nD\ln(nT/\tau)\right)$, for static networks of diameter at most $D$.
Their result seems to extend to \emph{dynamic} networks, coinciding with one of our bounds for networks of arbitrary static diameter.
While we show faster convergence when the total load $T$ is small, their analysis would show faster convergence as long as at any time step the diameter is bounded by some value $D$.
For integral load balancing, they study the simplified task of local balancing: only \emph{adjacent} nodes must have similar loads.
As adjacencies change frequently in a dynamic network, such results do not directly extend to the dynamic setting, as also evident by our impossibility result in Section~\ref{sec:impossibility}.

\section{Preliminaries}

We start by formally defining the load balancing problem, and then describe the algorithm of~\cite{DinitzFGN17}, which we augment.

\medskip
\noindent\textbf{Model.} We consider a dynamic graph on $n$ nodes, which evolves over time.  This dynamic graph is represented by a sequence of simple connected graphs $\set{G_r}=\set{V, E_r}$.  Time is divided into rounds, and we refer to the duration of graph $G_r$ as \emph{round $r$}. We assume that each node learns about its current neighbors (i.e., its local view of the new topology) at the beginning of each round.

Each node $v\in V$ starts with an initial load $w_0(v)\in \RR_{\geq 0}$.  We denote by $w_r(v)$ the load of node $v$ at the end of round $r$. 
Let $T = \sum_{v\in V} w_0(v)$ be the total load in the system.  
We say that the load in the system is $\tau$-balanced at round $r$ if it holds that $\forall v,u \in V, \size{w_r(v) - w_r(u)}\leq \tau$ for some convergence parameter $\tau$.

Communication occurs locally, in a round-by-round fashion.  Within a round, there are two types of communication: exchanging information, and exchanging load. First, each node can send a small message (i.e., containing $O(\log n + \log T)$ bits) to all of its current neighbors, and receive a (small) response to that message. Second, a node can establish a connection with $O(1)$ of its neighbors.\footnote{In our case, a node will establish connections with two neighbors. In Dinitz et al.~\cite{DinitzFGN17}, a node establishes a connection with a single neighbor.}  Third, if two neighboring nodes both establish a connection with each other, then those two nodes can transfer load to each other.

In our algorithm, the nodes first exchange their load values simultaneously, and then  exchange parts of their load, also simultaneously.  In particular, we do not abuse the ability to exchange load with two neighbors in order to use a node as a relay between two of its neighbors.

\medskip
\noindent\textbf{A note on computational models.}
Most prior works on load balancing in static networks analyze the convergence time of some predefined iterative process, where nodes take decisions only based on the loads they and their neighbors currently hold.
In a more enabling setting where nodes can \emph{coordinate}, say the \textsc{congest} model, load balancing in static networks can be easily done in merely diameter $O(D)$ time---the nodes can aggregate all the load over a BFS tree and then distribute it equally.
This technique, however, might require a node to exchange load with many neighbors in some round.
To get an algorithm where nodes exchange load on a single edge in a round, one can use the token-passing technique of~\cite{HolzerW12}, and get a a linear, $O(n)$ time algorithm.
Sublinear-time convergence is impossible, as evident by the example of a star graph where all the load is initially in the central node.
In our work, we study dynamic networks, where the use of such simple global coordination mechanisms is not possible.

\medskip
\noindent\textbf{Randomized max-neighbor load balancing.}  We review the algorithm of~\cite{DinitzFGN17}, which we augment.

In each round $r \geq 1$, each node $u \in V $ flips a
fair coin to determine whether to send or receive connection proposals. 
\begin{itemize}
    \item If $u$ decides to send, it selects neighbor $v$ with a maximum difference in load, i.e., $v=\argmax_{v\in N_r(u)}\{\size{w_{r-1}(v)-w_{r-1}(u)}\}$ and sends $v$ a connection proposal. 
    \item If $u$ decides to receive, and the set  $S_r$ of nodes sending proposals to $u$ at rounds $r$ is non-empty, it accepts a proposal from a node $v$ with maximum difference of loads, i.e., $v = \argmax_{v\in S_r} \{\size{w_{r-1}(v)-w_{r-1}(u)}\}$. 
    In this case we say that $u$ and $v$ are connected in round $r$. 
\end{itemize}
In both $\argmax$ decisions, ties can be broken by
any fixed deterministic criteria. 
If two nodes $u$ and $v$ are connected in round $r$, then
they evenly balance their respective workloads by setting
$w_r(u) = w_r(v) = (w_{r-1}(u)+w_{r-1}(v))/2$.

\medskip
\noindent\textbf{Smoothing of dynamic networks.}
We use a model of smoothed analysis of dynamic networks, defined as follows.
At each step, we think of $\gtmp$ as the graph suggested by the adversary. 
The actual future graph, $\gnew$, will be a modified version of $\gtmp$, randomly chosen as a function of $\gtmp$. 
In addition, we consider the set $\gallowed$ of \emph{allowed graphs} for a specific problem. 
For load balancing, this is just the set of all connected graphs on $n$ nodes.

\begin{definition}
Let $t \in \NN$ be a parameter, $\gallowed$ a family of graphs, and $\gtmp$ a graph in $\gallowed$.
A $t$-smoothing of $\gtmp$ 
is a graph $\gnew$ which is picked uniformly at random from all the graphs of $\gallowed$ that are at Hamming distance at most $t$ from $\gtmp$.
The parameter $t$ is called the \emph{noise parameter.}
\end{definition}

We are now ready to define smoothing for dynamic graphs. The definition we use is one of a few suggested in~\cite{MPS20}, which in turn extends the original definition of~\cite{DFGN18}.

For a positive real number $x$, we define the random variable $\roundp(x)$, which takes the value $\ceil{x}$ with probability $x - \floor{x}$ (the fractional part of $x$), and $\floor{x}$ otherwise.

\begin{definition}
\label{def: smoothed_dynamic_network}
Let $k\in\RR_{+}$ be a parameter, and $\gallowed$ a family of graphs.
Let $\Hcal=(G_1,G_2,\ldots)$ be a dynamic graph, i.e., sequence of ``temporary'' graphs. 
A $k$-smoothed dynamic graph is the dynamic graph $\Hcal'=(G'_1,G'_2,\ldots)$ defined from $\Hcal$, where 
for each round $i>0$,  $G'_i$ is the $t_i$-smoothing of $G_i$, where $t_i\sim \roundp(k)$.

The definition extends to an \emph{adaptive} adversary, where $\Hcal$ and $\Hcal'$ are defined in parallel: the graph $G_i$ is chosen by an adversary after the graphs $G_1,G_2,\ldots,G_{i-1}$ and $G'_1,G'_2,\ldots,G'_{i-1}$ are already chosen, and then $G'_i$ is defined as above.
\end{definition}
The algorithm and analysis in Section~\ref{sec:smoothed analysis} deals with the harder case of an \emph{adaptive} adversary. 
\section{Deterministic load balancing}
\label{sec: det-alg}
\subsection{Our algorithm}
Our algorithm avoids the randomness used in the aforementioned method, by having every node both send and receive connection proposals at every round. Given the original communication graph $G=(V,E)$ for a round, we create a new graph $G'=(V', E')$ for that round, where $V'=\set{v_s, v_a \mid v \in V}, E'=\set{(v_s, u_a) \mid (v,u) \in E}$, i.e., splitting each node into a ``sender'' and a ``receiver'' part. Note that $G'$ is bipartite. When creating $G'$ we evenly split the load of $v$ among $v_s, v_a$, that is, $w_0(v_s)=w_0(v_a)=w_0(v)/2$.

We then run the algorithm of \cite{DinitzFGN17} on $G'$, but instead of coin flips, all $v_a$ nodes always receive and all $v_s$ nodes always send. We call this phase the \emph{interactive balancing}.

Moreover, after every balancing iteration of the algorithm of~\cite{DinitzFGN17}, we rebalance the load evenly between all $v_a, v_s$. We call this phase the \emph{internal balancing}.

\subsection{Analysis}
We start with some useful definitions.
\begin{definition}
    Given a set $V$ of nodes and a weight function $w:V \to \RR_{\geq 0}$, we define for every $u,v\in V$ the \emph{induced difference} $d_w(u,v) = \size{w(u) - w(v)}$.
\end{definition}

\begin{definition}
    Two weight functions $\wold, \wnew:V \to \RR_{\geq 0}$ are said to \emph{simulate a balancing step} over a set of nodes $V$ and a matching (a set of disjoint edges) $A = \set{(u_i,v_i)}_{i=1}^{m}$ if
    for each edge $(u_i,v_i)$:
    \[\wnew(u_i) = \wnew(v_i) = (\wold(u_i) + \wold(v_i)) / 2 \]
    and for any node not appearing in $A$, $\wnew(z) = \wold(z)$.
\end{definition}

We next define a general potential function $\phi$.
\begin{definition}
	Given a set $V$ of nodes and a weight function $w:V \to \RR_{\geq 0}$, the \emph{potential function} $\phi$ of $w$ is the sum of induced differences over all pairs of nodes:
    \[\phi(V,w) = \sum_{\set{\set{u,v}\mid u,v\in V}} d_w(u,v).\]
\end{definition}
The above sum is over all un-ordered pairs of nodes; for brevity, we later just write $u,v\in V$ in the subscript.

Recall that we write $w_r:V \to \RR_{\geq 0}$ as the weight function at the end of round $r$. 
For the set $V'$ created by splitting each node, however, we define \emph{two} weight functions for each round. The function $w_r^1$ at the beginning of round $r$, and $w_r^2$ for the end of it (after the interactive balancing).
In terms of the bipartite graph $G'$, our algorithm at iteration~$r$ can be described in two phases.
\begin{itemize}
    \item 
    Starting with weights $w_r^1$, balance according to~\cite{DinitzFGN17}, where all $v_s$ nodes send proposals, and all $v_a$ nodes receive (and answer) proposals. We end up with weights $w_r^2$.
    
    On the real graph, at this moment, $w_r(v) = w_r^2(v_s) + w_r^2(v_a)$.
    
    \item
    Starting with weights $w_r^2$, we balance over the perfect matching $(v_s,v_a)$ in order to prepare for next round. We end up with weights $w_{r+1}^1$.
    This balancing step guarantees that $w_{r+1}^1(v_s) = w_{r+1}^1(v_a) = w_{r}(v) / 2$.
\end{itemize}

We are now ready to relate our potential function defined $w$ on $V$ and the function $w^1$ defined on $V'$. 
Our first claim is that at the beginning of any round $r \geq 1$, the potential on~$V'$ relates to that on $V$ at the end of the previous round as follows.
\begin{claim}
    \label{claim:potential_connection}
     $\phi(V',w_{r}^1) = 2\phi(V,w_{r-1}).$
\end{claim}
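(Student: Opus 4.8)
The plan is to exploit the defining guarantee of the internal balancing step—namely that at the start of round $r$ each original node $v$ has been split into two equal halves—and then to expand $\phi(V',w_r^1)$ directly by grouping the unordered pairs of $V'$ according to which original nodes of $V$ they arise from.

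First I would record the key identity $w_r^1(v_s) = w_r^1(v_a) = w_{r-1}(v)/2$ for every $v \in V$. This is exactly the guarantee supplied by the internal balancing over the perfect matching $\set{(v_s,v_a)}$ at the end of round $r-1$ (the statement ``$w_{r+1}^1(v_s) = w_{r+1}^1(v_a) = w_r(v)/2$'' with the round index shifted down by one). Everything else is a computation built on top of this.

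Next I would partition the unordered pairs $\set{x,y}\subseteq V'$ into two classes. The first class consists of the $n$ \emph{internal} pairs $\set{v_s,v_a}$; since both endpoints carry weight $w_{r-1}(v)/2$, each such pair has induced difference $d_{w_r^1}(v_s,v_a)=0$ and contributes nothing to the potential. The second class consists of the \emph{cross} pairs whose two endpoints come from distinct original nodes $u\neq v$. For each unordered pair $\set{u,v}$ of $V$ there are exactly four such cross pairs, namely $\set{u_s,v_s}$, $\set{u_s,v_a}$, $\set{u_a,v_s}$, $\set{u_a,v_a}$, and each of them has induced difference $\size{w_{r-1}(u)/2 - w_{r-1}(v)/2} = d_{w_{r-1}}(u,v)/2$. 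Summing over all pairs then gives
\[
\phi(V',w_r^1) = \sum_{u,v\in V} 4\cdot\frac{d_{w_{r-1}}(u,v)}{2} = 2\sum_{u,v\in V} d_{w_{r-1}}(u,v) = 2\,\phi(V,w_{r-1}),
\]
which is the claimed identity.

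The argument is essentially bookkeeping, so I do not expect a genuine obstacle; the one place to be careful is the combinatorics of the second class—verifying that every cross pair is counted exactly once, and checking that the factor $4$ (from the four endpoint combinations) and the factor $1/2$ (from halving each load when splitting) combine to give precisely the factor $2$ in the statement. Once the internal pairs are correctly identified as contributing zero, the remaining sum collapses cleanly onto $\phi(V,w_{r-1})$.
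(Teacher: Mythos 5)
Your proof is correct and takes essentially the same route as the paper's: both start from the identity $w_r^1(v_s)=w_r^1(v_a)=w_{r-1}(v)/2$ guaranteed by the internal balancing, and both expand $\phi(V',w_r^1)$ by grouping the four cross pairs arising from each pair $u\neq v$ of $V$, each contributing $d_{w_{r-1}}(u,v)/2$. Your explicit observation that the $n$ internal pairs $\set{v_s,v_a}$ contribute zero is a small point the paper leaves implicit, but the argument is otherwise identical.
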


\begin{proof}
	By a direct calculation, using the structure of $V'$:
	\begin{align*}
		\phi(V',w_{r}^1)
		=& \sum_{u, v\in V'} \size{w_{r}^1(u) - w_{r}^1(v)} \\
		=& \sum_{u,v\in V} \left( \size{w_{r}^1(u_s) - w_{r}^1(v_s)} + \size{w_{r}^1(u_s) - w_{r}^1(v_a)}\right) +\\
		& \sum_{u,v\in V} \left( \size{w_{r}^1(u_a) - w_{r}^1(v_s)} + \size{w_{r}^1(u_a) - w_{r}^1(v_a)} \right) \\
		=& \sum_{u,v\in V} 4 \size{\frac{w_{r-1}(u) - w_{r-1}(v)}{2}} 
		= 2\phi(V,w_{r-1}).\qedhere
	\end{align*}
\end{proof}

We also use the following lemma from~\cite{DinitzFGN17}, whose proof appears here for completeness.
\begin{lemma}
    \label{lem:balancing_step}
    If $\wold, \wnew:V \to \RR_{\geq 0}$ simulate a balancing step over node set $V$ and a matching $A$, then
    \begin{equation*}
        \label{ineq:balancing_step}
        \phi(V,\wnew) \leq \phi(V,\wold) - \wold(A)
    \end{equation*}
    where $\wold(A) = \sum_{(u_i,v_i)\in A} d_{\wold}(u_i,v_i)$.
\end{lemma}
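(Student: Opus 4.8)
The plan is to track how each unordered pair $\{x,y\}$ contributes to $\phi$ before and after the balancing step, and to show that the matched pairs alone already produce the required drop of $\wold(A)$, while no other pair's contribution increases. Since $A$ is a matching, every node is incident to at most one edge of $A$, and this disjointness will let me process the edges one at a time without interference.

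The key analytic fact I would isolate first is the convexity inequality for the absolute value: for any reals $a,b,c$,
\[2\,\size{\tfrac{a+b}{2}-c} \le \size{a-c}+\size{b-c}.\]
This says that replacing two values by their common average never increases the sum of distances to any fixed third value~$c$; it is the single ingredient driving the whole estimate.

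With this in hand, I would process the edges $(u_i,v_i)\in A$ sequentially, starting from $\wold$. When I average the endpoints of $(u_i,v_i)$: (i) the pair $\{u_i,v_i\}$ contributes $d_{\wold}(u_i,v_i)$ beforehand and $0$ afterward, a drop of exactly $d_{\wold}(u_i,v_i)$ (and because the edges of $A$ are disjoint, $u_i$ and $v_i$ still carry their original $\wold$-values when their turn comes, so this drop really is measured against the original difference); (ii) for every other node $z$, the two pairs $\{u_i,z\}$ and $\{v_i,z\}$ together contribute at most what they did before, by the displayed inequality applied with $c$ equal to the current value of $z$; and (iii) all pairs avoiding $\{u_i,v_i\}$ are untouched. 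Hence processing each edge decreases $\phi$ by at least $d_{\wold}(u_i,v_i)$, and summing over all edges gives a total decrease of at least $\sum_i d_{\wold}(u_i,v_i)=\wold(A)$, which is the claim.

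The only delicate point is step (ii) when $z$ is itself the endpoint of another matched edge, since that value is modified later; this is exactly the situation of a pair whose \emph{both} endpoints move. The sequential ordering resolves it cleanly: at the moment I average edge $i$, I treat the still-unprocessed endpoints as fixed third values, and the inequality already guarantees their pair contributions do not grow; the later averaging of those endpoints is accounted for when their own edge is processed, again by the same inequality, now applied with $c$ equal to the average $m_i$ created earlier. Thus the apparent obstacle—estimating pairs between two distinct matched edges jointly—is reduced to two successive applications of the one-variable convexity bound rather than a two-variable argument.
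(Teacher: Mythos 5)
Your proof is correct and follows essentially the same route as the paper's: reduce to a single matched edge via the convexity inequality $2\size{(a+b)/2-c}\le\size{a-c}+\size{b-c}$, then compose sequentially over the matching using disjointness so that each edge's drop is measured against the original $\wold$-difference. Your explicit treatment of pairs spanning two distinct matched edges is a point the paper leaves implicit, but the argument is the same.
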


\begin{proof}
	First, we show correctness for the case where the balancing step is over a single edge, i.e. $A = \set{(u,v)}$.
	In this case, we know that $\wold(A) = d_{\wold}(u,v)$, and $\wnew(u) = \wnew(v)$ so $d_{\wnew}(u,v) = 0$.
	
	For almost every node $z$ it holds that $\wold(z) = \wnew(z)$, and therefore the only different summands between the sums $\phi(V,\wold)$ and $\phi(V,\wnew)$ are the summand that corresponds to $(u,v)$, and the ones that correspond to either $(z,u)$ or $(z,v)$, for $z\neq u,v$. 
	We group the last two, and re-write:
	\begin{align*}
		\phi(V,\wnew)& - \phi(V,\wold)
		=\\
		& \left( d_{\wnew}(u,v) + \sum_{z\neq u,v} d_{\wnew}(z,u) + \sum_{z\neq u,v} d_{\wnew}(z,v) \right) - \\
		& \left( d_{\wold}(u,v) + \sum_{z\neq u,v} d_{\wold}(z,u) + \sum_{z\neq u,v} d_{\wold}(z,v) \right) \\
		=& \sum_{z\neq u,v} d_{\wnew}(z,u) + \sum_{z\neq u,v} d_{\wnew}(z,v) - \\
		& \left(\sum_{z\neq u,v} d_{\wold}(z,u) + \sum_{z\neq u,v} d_{\wold}(z,v)\right) - \wold(A).
	\end{align*}
	
	So, it is enough to prove the following inequality:
	\begin{equation}
		\label{ineq:sums}
		\sum_{z\neq u,v} \left( d_{\wnew}(z,u) + d_{\wnew}(z,v) \right) \leq \sum_{z\neq u,v} \left( d_{\wold}(z,u) + d_{\wold}(z,v) \right)
	\end{equation} 
	
	Indeed, for each $z \neq u,v$ it holds that:
	\begin{align*}
		d_{\wnew}(z,u)& + d_{\wnew}(z,v)\\
		&=\size{\wnew(z) - \wnew(u)} + \size{\wnew(z) - \wnew(v)} \\
		&= 2\size{\wold(z) - (\wold(u) + \wold(v) )/2} \\
		&= \size{2\wold(z) - \wold(u) - \wold(v)} \\
		&\leq \size{\wold(z) - (\wold(u)} + \size{\wold(z) - (\wold(v)} \\
		&= d_{\wold}(z,u) + d_{\wold}(z,v).
	\end{align*}
	Summing over $z$ proves inequality~\eqref{ineq:sums}, which proves the claim for $A=\set{(u,v)}$.
	
	For the case of multiple balancing pairs, we note that all pairs are disjoint, and so the balancing step is equivalent to a series of $m$ balancing steps, each over one edge $(u_i,v_i)$ in $A$.
	Each such step decreases the potential by $w(u_i,v_i)$, and altogether the potential decreases by $\sum_i w(u_i,v_i) = \sum_{i} d_{\wold}(u_i,v_i) = \wold(A)$.
\end{proof}

Our next claim quantifies how fast the the potential function decreases over time.

\begin{definition}
    A directed edge $(u,v)$ is said to \emph{connect} at round $r$ if at that round $u$ sends a proposal that $v$ accepts.\footnote{Specifically, $(u,v)$ and $(v,u)$ can both connect at the same round, if they both accept each other's offers.}
\end{definition}

Note that if such a connection occurs in round $r$, the amount of load shifting between $u$ and $v$ is half the difference of their weights. But actually, we only use the weight $u$ designated for sending, and $v$ for accepting. This amounts to $(1/2)d_{w_r^1}(u_s,v_a) = (1/4) d_{w_{r-1}}(u,v)$. It would be simpler to keep twice this quantity in mind, which brings us to define the following.

\begin{definition}
    \label{def: twice_shifted_load}
    We denote by $A_r$ the set of ordered edges that connect at round $r$. We use $D_r$ to denote \emph{twice} the total load shifted at round $r$:
    \[D_r := \frac{1}{2} \sum_{(u,v)\in A_r} d_{w_{r-1}}(u,v).\]
\end{definition}

\begin{lemma}
    \label{lem:potential_drop}
    For any round $r\geq 1$, we have
    \[\phi(V, w_{r}) \leq \phi(V,w_{r-1}) - D_r / 2.\]
\end{lemma}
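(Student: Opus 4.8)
The plan is to route everything through the auxiliary vertex set $V'$, tracking how $\phi(V',\cdot)$ changes across the two phases of round $r$ and only at the end translating back to $V$ via Claim~\ref{claim:potential_connection}. The two tools are Claim~\ref{claim:potential_connection} (which equates $\phi$ on $V'$ with twice $\phi$ on $V$ at the start of a round) and Lemma~\ref{lem:balancing_step} (which bounds the potential drop of a balancing step by the total weight difference across the balanced matching). I would apply each of these twice.

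First, consider the \emph{interactive} phase, $w_r^1 \to w_r^2$. The connected edges $A_r$ correspond in $G'$ to the edge set $M = \set{(u_s,v_a) : (u,v)\in A_r}$, and since each sender $u_s$ emits one proposal and each receiver $v_a$ accepts at most one, $M$ is a matching of $G'$; the phase exactly simulates a balancing step over $M$. Hence Lemma~\ref{lem:balancing_step} gives $\phi(V',w_r^2) \le \phi(V',w_r^1) - w_r^1(M)$, where $w_r^1(M) = \sum_{(u_s,v_a)\in M} d_{w_r^1}(u_s,v_a)$. Using $w_r^1(u_s) = w_{r-1}(u)/2$ and $w_r^1(v_a) = w_{r-1}(v)/2$, each term equals $\tfrac12 d_{w_{r-1}}(u,v)$, so $w_r^1(M) = \tfrac12\sum_{(u,v)\in A_r} d_{w_{r-1}}(u,v) = D_r$. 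Combined with $\phi(V',w_r^1) = 2\phi(V,w_{r-1})$ from Claim~\ref{claim:potential_connection}, this yields $\phi(V',w_r^2) \le 2\phi(V,w_{r-1}) - D_r$.

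Second, the \emph{internal} phase, $w_r^2 \to w_{r+1}^1$, simulates a balancing step over the perfect matching $\set{(v_s,v_a)}$, so Lemma~\ref{lem:balancing_step} gives $\phi(V',w_{r+1}^1) \le \phi(V',w_r^2)$ (discarding the nonnegative drop term). Applying Claim~\ref{claim:potential_connection} at round $r+1$ gives $\phi(V',w_{r+1}^1) = 2\phi(V,w_r)$. Chaining the two phases,
\begin{align*}
2\phi(V,w_r) = \phi(V',w_{r+1}^1) \le \phi(V',w_r^2) \le \phi(V',w_r^1) - D_r = 2\phi(V,w_{r-1}) - D_r,
\end{align*}
and dividing by $2$ yields the claim. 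The main point to verify carefully is that each phase is genuinely a balancing step in the sense of the earlier definition: that $M$ is a valid matching of $G'$ so that Lemma~\ref{lem:balancing_step} applies (even when both $(u,v)$ and $(v,u)$ connect, which in $G'$ touches four distinct vertices $u_s,v_a,v_s,u_a$), that every node not balanced in a phase keeps its weight, and that the measured drop $w_r^1(M)$ matches the definition of $D_r$ up to the factor of $\tfrac12$ coming from the load-splitting in $V'$.
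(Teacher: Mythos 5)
Your proof is correct and follows essentially the same route as the paper's: two applications of Lemma~\ref{lem:balancing_step} on $V'$ (one per phase, identifying $w_r^1(M)=D_r$ for the interactive phase and discarding the nonnegative drop of the internal phase), then translating back to $V$ via Claim~\ref{claim:potential_connection}. Your explicit checks that $M$ is a matching in $G'$ and that the factor $\tfrac12$ from load-splitting matches the definition of $D_r$ are exactly the points the paper also verifies.
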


\begin{proof}
	First, we apply Lemma~\ref{lem:balancing_step} over the set $V'$, twice for each round, showing a decrease over time. 
	Then, using Claim~\ref{claim:potential_connection}, we get the same decrease of potential in $V$.
	
	For any round $r$, we first note that in phase two (the internal balancing), the balancing step on $V'$ and the matching $A = \set{(v_s,v_a)}_{v\in V}$, is simulated by the weights $\wold = w_r^2$ and $\wnew = w_{r+1}^1$. We use the mere fact that $\wold(A) \geq 0$ (by definition), and apply Lemma~\ref{lem:balancing_step}:
	\[\phi(V', w_{r+1}^1) \leq \phi(V',w_r^2).\]
	
	The first phase of round $r$ (the interactive balancing), can be viewed as a balancing step on $V'$ and the matching $A'_r$, where $A'_r = \set{(u_s, v_a) \mid (u,v) \in A_r}$ -- meaning, all edges $(u,v)$ such that $u$ made a proposal and $v$ accepted it at round $r$.
	The reason $A'_r$ is a matching over $V'$, is that in our algorithm no sender part $v_s$, and no answering part $v_a$, can connect to two nodes simultaneously. 
	This balancing step is simulated by the weights $\wold = w_r^1$ and $\wnew = w_r^2$, which means
	\[\phi(V', w_r^2) \leq \phi(V',w_r^1) - w_r^1(A'_r) \]
	We relate this weight to $D_r$ as follows:
	\[ w_r^1(A'_r) = \sum_{(u_s,v_a)\in A'_r} d_{w_r^1}(u_s,v_a) = \frac{1}{2}\sum_{(u,v)\in A_r} d_{w_{r-1}}(u,v) = D_r ,\]
	and conclude that
	\[\phi(V', w_r^2) = \phi(V',w_r^1) - D_r .\]
	
	Combining the two phases, we get
	\[\phi(V', w_{r+1}^1) \leq \phi(V',w_r^1) - D_r ,\]
	and translating back to the potential over $V$, we get
	\[\phi(V, w_{r}) \leq \phi(V,w_{r-1}) - D_r / 2.\qedhere \]
\end{proof}

We next show that in our deterministic algorithm, at every iteration --- \emph{every} edge must have some weight shifted nearby that relates to the difference on this edge. Formally, we show the following:
\begin{claim}
    \label{claim:covering_edge}
    For each edge $e = (u,v)\in E_r$, there exists another edge $e' \in E_r$ that connects at round $r$ and fulfills:
    \begin{itemize}
        \item 
        $e'$ is within $3$ hops from $e$ in the graph $G_r$.
        
        \item
        The weight difference of $e'$ exceeds that of $e$ in round $r-1$, meaning: $d_{w_{r-1}}(e') \geq d_{w_{r-1}}(e)$.
    \end{itemize}
    We call $e'$ a \emph{covering edge} for $e$.
\end{claim}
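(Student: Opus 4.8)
The plan is to show, for an arbitrary edge $e=(u,v)\in E_r$, that a connecting edge $e'$ with large enough weight difference must exist within $3$ hops. First I would set up the key quantities: for the endpoint $u$, consider its \emph{sender} part $u_s$ in the bipartite graph $G'$, and let $z_a = \argmax_{y_a\in N(u_s)}\{d_{w_r^1}(u_s,y_a)\}$ be the neighbor that $u_s$ proposes to (such a neighbor exists since $(u,v)\in E_r$ implies $u_s$ has at least one neighbor, namely $v_a$). By the max-neighbor proposal rule, $d_{w_r^1}(u_s,z_a)\ge d_{w_r^1}(u_s,v_a)$, and since $w_r^1$ splits loads evenly this gives $d_{w_{r-1}}(u,z)\ge d_{w_{r-1}}(u,v)=d_{w_{r-1}}(e)$. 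The first real observation is that the node $z_a$ receiving $u_s$'s proposal does connect with \emph{someone} at round $r$: a receiver with a nonempty proposal set always accepts, and $z_a$'s accepted proposal comes from a sender $x_s$ with $d_{w_r^1}(x_s,z_a)\ge d_{w_r^1}(u_s,z_a)$ by the acceptance rule.

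From here I would identify the covering edge as $e'=(x,z)$, the edge that actually connects, and verify both required properties. The weight bound chains through the two $\argmax$ inequalities: $d_{w_{r-1}}(x,z)\ge d_{w_{r-1}}(u,z)\ge d_{w_{r-1}}(u,v)$, using again that even splitting in $w_r^1$ preserves these differences (up to the common factor of two that cancels in the comparison). So the second bullet, $d_{w_{r-1}}(e')\ge d_{w_{r-1}}(e)$, holds. For the distance bullet, I would trace the path in $G_r$: the edge $e=(u,v)$ shares endpoint $u$ with the proposal edge $(u,z)$, and $(u,z)$ shares endpoint $z$ with the connecting edge $(x,z)$. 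Reading off hop counts in $G_r$ (recalling $(u_s,z_a)\in E'$ corresponds to $(u,z)\in E_r$, and likewise $(x_s,z_a)$ to $(x,z)\in E_r$), one sees $u$ is one endpoint of $e$, $z$ is at distance at most $2$ from $e$ via the edge $(u,z)$, and $x$ is adjacent to $z$, placing $e'=(x,z)$ within $3$ hops of $e$.

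The main obstacle I anticipate is the bookkeeping of the hop distance and the translation between $G_r$ and the split graph $G'$. I need to be careful that ``$3$ hops from $e$'' is measured in the original graph $G_r$ and not in $G'$, since the sender/receiver split artificially doubles some structure; the correspondence $(y_s,z_a)\leftrightarrow(y,z)$ must be applied consistently so that each edge in $G'$ used in the argument maps back to a genuine edge of $G_r$. A secondary subtlety is handling the $\argmax$ tie-breaking and the possibility that $x=u$ or $z=v$ (degenerate but harmless cases where the covering edge is even closer), which I would note do not affect the inequalities or the distance bound. Once the two inequalities and the path $e \to (u,z) \to e'$ are in place, the claim follows directly, so I expect the argument to be short modulo this careful translation between the two graphs.
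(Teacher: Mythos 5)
Your proof is correct and takes essentially the same route as the paper: the paper's exhaustive four-case analysis (whether $u$'s proposal target is $v$ or some $z\neq v$, and whether that target accepts $u$ or some other sender) is precisely your single chain $d_{w_{r-1}}(x,z)\geq d_{w_{r-1}}(u,z)\geq d_{w_{r-1}}(u,v)$ specialized to the degenerate subcases $z=v$ and $x=u$, which you correctly note are harmless. Your unified formulation, including the careful translation between $G'$ and $G_r$ and the observation that a proposed-to receiver always connects with someone, matches the paper's argument step for step.
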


The proof is by exhaustive case analysis, finding a covering edge nearby.

\begin{proof}
	We know that at round $r$, all nodes sent proposals, and any node that got at least one proposal chose the best one. We exhaust all cases:
	\begin{itemize}
		\item If $u$ sent a proposal to $v$, and $v$ accepted it, then $(u,v)$ connects at round $r$ and serves as a covering edge for itself.
		
		\item If $u$ sent a proposal to $v$ who then accepted a different proposal from $z \neq u$, then it must be the case that $d_{w_{r-1}}(z,v) \geq d_{w_{r-1}}(u,v)$, and also $(z,v)$ connects and it is within $3$ hops, so it covers $(u,v)$.
		
		\item If $u$ sent a proposal to some other node $z \neq v$ who then accepted, then we know $d_{w_{r-1}}(u,z) \geq d_{w_{r-1}}(u,v)$, and also $(u,z)$ connects and it is within $3$ hops, so it covers $(u,v)$.
		
		\item If $u$ sent a proposal to $z \neq v$, and $z$ accepted a different proposal from $y\neq u$, then we have $d_{w_{r-1}}(y,z) \geq d_{w_{r-1}}(u,z) \geq d_{w_{r-1}}(u,v)$, and also $(y,z)$ connects and it is within $3$ hops, so it covers $(u,v)$.
	\end{itemize}
	In all possible cases, we have a covering edge, which concludes the proof.
\end{proof}

We now turn to prove a lower bound for the amount of load shifted at round $r$. This statement is stronger than the one in~\cite{DinitzFGN17} in two ways: (i) It holds with probability $1$; (ii) We gain a multiplicative factor of $\Theta(\log n)$.

\begin{lemma}
    \label{lem:load_shifted_lb}
    For any round $r$, we have
    \[D_r \geq t_{r-1} / 30 ,\]
    where $t_r := \max_{u,v \in V} d_{w_r}(u,v)$ is the maximal difference between any two nodes in $V$ (not necessarily connected), at the end of round $r$.
\end{lemma}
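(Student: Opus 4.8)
The plan is to charge the total gap $t_{r-1}$ to connecting edges via Claim~\ref{claim:covering_edge}, so that $D_r$ --- which sums weight differences over $A_r$ --- cannot be too small. First I would pick a pair $u^*,v^*$ with $d_{w_{r-1}}(u^*,v^*)=t_{r-1}$ and, using that $G_r$ is connected, fix a \emph{shortest} path $P=(u^*=p_0,p_1,\dots,p_L=v^*)$ between them. The triangle inequality makes the edge differences along $P$ add up to at least the endpoint gap:
\[\sum_{i=0}^{L-1} d_{w_{r-1}}(p_i,p_{i+1}) \ \ge\ \bigl|w_{r-1}(u^*)-w_{r-1}(v^*)\bigr| \ =\ t_{r-1}.\]

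For each path edge $e_i=(p_i,p_{i+1})\in E_r$, Claim~\ref{claim:covering_edge} supplies a covering edge $e_i'\in A_r$ within $3$ hops of $e_i$ with $d_{w_{r-1}}(e_i')\ge d_{w_{r-1}}(e_i)$, so $\sum_i d_{w_{r-1}}(e_i') \ge t_{r-1}$. The only real issue is that the assignment $e_i\mapsto e_i'$ is not injective, so I must bound how many path edges share a covering edge. This is precisely where taking a \emph{shortest} path is essential: if one edge $e'$ covers both $e_i$ and $e_j$, then an endpoint of $e_i$ and an endpoint of $e_j$ each lie within $3$ hops of an endpoint of $e'$, hence within $3+1+3$ hops of one another in $G_r$; since $P$ is shortest, $d_{G_r}(p_a,p_b)=|a-b|$, forcing $|i-j|$ below a fixed constant $c$. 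Thus every element of $A_r$ is the covering edge of at most $c$ path edges.

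Putting the pieces together, and using that all summands are nonnegative so that distinct covering edges contribute to $2D_r=\sum_{(u,v)\in A_r} d_{w_{r-1}}(u,v)$,
\[t_{r-1} \ \le\ \sum_i d_{w_{r-1}}(e_i') \ \le\ c\!\!\sum_{e'\in A_r}\!\! d_{w_{r-1}}(e') \ =\ 2c\,D_r,\]
which gives $D_r \ge t_{r-1}/(2c)$. A careful accounting of the hop distances (the covering edges produced in the case analysis of Claim~\ref{claim:covering_edge} in fact sit within two hops of $e$) keeps $c$ small enough that $2c\le 30$, yielding the claimed $D_r\ge t_{r-1}/30$.

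I expect the main obstacle to be exactly this multiplicity bound: extracting, from the precise meaning of ``within $3$ hops,'' how many consecutive path edges may legitimately be assigned a common covering edge, and controlling the constant so that the final denominator comes out to $30$ rather than something larger. The telescoping lower bound along $P$ and the bookkeeping between ordered edges in $A_r$ and their (unordered) weight differences are routine by comparison.
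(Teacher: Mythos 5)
Your proof is correct and follows the same skeleton as the paper's: pick a pair realizing $t_{r-1}$, take a \emph{shortest} path between them in $G_r$, telescope via the triangle inequality, and transfer the path's weight to connecting edges via Claim~\ref{claim:covering_edge}. The one place you genuinely diverge is the double-counting step. The paper (following~\cite{DinitzFGN17}) greedily extracts a subset $P'\subseteq P$ of path edges that are pairwise at least $7$ hops apart, retaining a $1/15$ fraction of the path weight, and then argues that edges of $P'$ have \emph{distinct} covering edges (a shared covering edge would yield a length-$6$ shortcut, contradicting minimality of $P$). You instead keep all path edges and bound the \emph{multiplicity} of the covering map $e_i\mapsto e_i'$: the same shortest-path property forces any two path edges with a common covering edge to have indices within a constant of each other. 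The two arguments are dual and both rely on exactly the same locality fact; yours is arguably cleaner (no greedy selection lemma to quote) and, if you carry out the accounting, gives a better constant --- from the case analysis of Claim~\ref{claim:covering_edge} the covering edge either shares a vertex with $e$ or is reachable through one intermediate vertex, so two path edges sharing a covering edge have endpoints at distance at most $3$, hence indices differing by at most $5$, multiplicity at most $11$, and $D_r\geq t_{r-1}/22$. The only soft spot is that you defer this constant-chasing with ``a careful accounting keeps $2c\le 30$''; since that is precisely where the lemma's constant comes from, it should be made explicit, but it does check out.
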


\begin{proof}
	At each round $r$, we focus on two nodes that had maximal weight difference by the end of the previous round $r-1$.
	
	Formally, we take $u_{\min}, u_{\max}$ such that $d_{w_{r-1}}(u_{\min},u_{\max}) = t_{r-1}$. We now focus on the current graph $G_r = (V,E_r)$, and specifically on a \emph{shortest path} $P = (e_1,\dots,e_k)$ (in terms of hops) from $u_{\min}$ to $u_{\max}$ in the graph $G_r$.
	
	We use the following claims from the original analysis in~\cite{DinitzFGN17}:
	\begin{enumerate}
		\item
		Using the triangle inequality, the difference over all path edges, are larger than the difference between the two ends of the path:
		\[\sum_{e \in P} d_{w_{r-1}}(e) \geq t_{r-1} .\]
		
		
		\item
		If we greedily choose into a set $P'$ the heaviest available edge of $P$ (according to $d_{w_{r-1}}(e)$), and with each choice $e$ turn all edges in its $6$-neighborhood unavailable --- we end up choosing a subset $P'$ of edges of $P$ that has a significant weight:
		\[\sum_{e\in P'} d_{w_{r-1}}(e) \geq (1/15)\sum_{e\in P} d_{w_{r-1}}(e) .\]
	\end{enumerate}
	
	Now, for any edge $e \in E_r$, let $c(e)$ be the edge that covers it in $G_r$, as chosen by the exhaustive analysis in the proof of Claim~\ref{claim:covering_edge}.
	
	We show that the edges of the set $P'$ do not share any covering edges. Indeed, by the definition of $P'$, every two edges $e, e'$ are at least $7$ hops apart. If we assume they have the same covering edge $c(e) = c(e')$, then through it there is a path of length $6$ from $e$ to $e'$, contradicting the fact that $P$ is a shortest path.
	
	The last piece of the puzzle is that in round $r$, on any covering edge $c(e) = (y(e),z(e))$, the load we actually shift is between $y(e)_s$ and $z(e)_a$.
	
	Now, we invoke the two quoted claims along with our last observations, and get
	
	\begin{align*}
		D_r = (1/2) \sum_{(u,v)\in A_r} d_{w_{r-1}}(u,v)
		&\geq (1/2)\sum_{e \in P'} d_{w_{r-1}}(c(e)) \\
		&\geq (1/2)\sum_{e \in P'} d_{w_{r-1}}(e) \\
		&\geq (1/30)\sum_{e \in P} d_{w_{r-1}}(e)\\
		&\geq t_{r-1} / 30
	\end{align*}
	which holds with probability $1$, concluding the proof.
\end{proof}

Combining Lemma~\ref{lem:potential_drop} with Lemma~\ref{lem:load_shifted_lb}, we obtain the next corollary.
\begin{corollary}
\label{cor:potential_decrease}
For every round $r \geq 1$, it holds that:
\[\phi(V, w_{r}) \leq \phi(V,w_{r-1}) - t_{r-1}/60 .\]
\end{corollary}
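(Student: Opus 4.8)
The plan is to chain the two results that immediately precede this corollary. By Lemma~\ref{lem:potential_drop}, for every round $r \geq 1$ we have
\[
\phi(V, w_{r}) \leq \phi(V,w_{r-1}) - D_r / 2 ,
\]
so the only remaining task is to replace $D_r$ by a quantity involving $t_{r-1}$. This is exactly what Lemma~\ref{lem:load_shifted_lb} provides: it states $D_r \geq t_{r-1}/30$.

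First I would substitute the lower bound on $D_r$ into the potential-drop inequality. Since the potential decreases by at least $D_r/2$, and $D_r/2 \geq t_{r-1}/60$, the drop is at least $t_{r-1}/60$. Concretely:
\[
\phi(V, w_{r}) \leq \phi(V,w_{r-1}) - D_r/2 \leq \phi(V,w_{r-1}) - t_{r-1}/60 .
\]
This is just a one-line composition, so there is essentially no obstacle here.

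The one point warranting a moment's care is that the two lemmas must hold simultaneously for the \emph{same} round $r$. Lemma~\ref{lem:potential_drop} holds deterministically for any round $r \geq 1$, and Lemma~\ref{lem:load_shifted_lb} likewise holds with probability $1$ for any round $r$ (the proof explicitly remarks that its bound holds deterministically). Hence both inequalities apply to every round $r \geq 1$ without any probabilistic or range mismatch, and the substitution is valid verbatim. I would therefore state the corollary simply as the two-step chain above, noting that the inequality $D_r/2 \geq t_{r-1}/60$ follows directly by dividing the bound of Lemma~\ref{lem:load_shifted_lb} by two. No case analysis, induction, or further estimation is needed, so I do not anticipate any genuine difficulty.
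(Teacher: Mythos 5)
Your proposal is correct and is exactly the paper's argument: the corollary is obtained by substituting the bound $D_r \geq t_{r-1}/30$ from Lemma~\ref{lem:load_shifted_lb} into the inequality $\phi(V,w_r) \leq \phi(V,w_{r-1}) - D_r/2$ of Lemma~\ref{lem:potential_drop}. Your added remark that both lemmas hold deterministically for every round $r \geq 1$ is accurate and matches the paper's intent.
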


The last part is deducing convergence bounds, as follows.
\begin{theorem}
    \label{thm:main}
    There is a deterministic load-balancing algorithm that guarantees $\tau$-convergence in a distributed dynamic network with total load~$T$ within $O(\min\set{n^2 \log\left( nT/\tau \right), nT/\tau}$ rounds.
\end{theorem}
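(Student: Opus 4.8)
The plan is to convert the per-round additive potential drop of Corollary~\ref{cor:potential_decrease} into a convergence time, using two complementary inequalities that relate the potential $\phi(V,w_r)$ to the maximum pairwise difference $t_r$. Two elementary facts drive both regimes. First, since $t_r$ is one of the non-negative summands of $\phi(V,w_r)$, we always have $t_r \le \phi(V,w_r)$, so it suffices either to drive the potential below $\tau$ (first regime) or to count the rounds of guaranteed drop (second regime). Second, both the interactive and the internal balancing phases replace pairs of (half-)loads by their averages, so the maximum real load never increases and the minimum never decreases; hence $t_r$ is non-increasing, which makes $\tau$-convergence a permanent property once attained.

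For the $n^2\log(nT/\tau)$ bound I would turn the additive drop into a multiplicative one. Since there are $\binom{n}{2}$ unordered pairs, each of difference at most $t_{r-1}$, we have $\phi(V,w_{r-1}) \le \binom{n}{2} t_{r-1} \le n^2 t_{r-1}/2$, i.e.\ $t_{r-1} \ge 2\phi(V,w_{r-1})/n^2$. Substituting into Corollary~\ref{cor:potential_decrease} gives $\phi(V,w_r) \le \phi(V,w_{r-1})\left(1 - \tfrac{1}{30n^2}\right)$, a geometric decay. Iterating and using $1-x\le e^{-x}$ yields $\phi(V,w_R) \le \phi(V,w_0)\,e^{-R/(30n^2)}$. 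Because the total load is $T$, the initial potential satisfies $\phi(V,w_0) \le (n-1)T < nT$ (the extreme case being all load at a single node). Solving $nT\,e^{-R/(30n^2)} \le \tau$ then gives $R = O\!\left(n^2\log(nT/\tau)\right)$ rounds, after which $t_R \le \phi(V,w_R) \le \tau$.

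For the $nT/\tau$ bound I would use the additive drop directly, together with the monotonicity of $t_r$. As long as the system is not yet $\tau$-balanced we have $t_{r-1} > \tau$, so Corollary~\ref{cor:potential_decrease} guarantees a drop exceeding $\tau/60$ in that round. Since the potential starts at $\phi(V,w_0) \le (n-1)T$ and remains non-negative, there can be at most $60(n-1)T/\tau = O(nT/\tau)$ such rounds; by monotonicity of $t_r$ these rounds form a prefix, so after $O(nT/\tau)$ rounds we reach $t_r \le \tau$ and it stays there. Taking the better of the two bounds yields the claimed $O(\min\{n^2\log(nT/\tau),\, nT/\tau\})$ and completes the argument.

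The computations are routine once Corollary~\ref{cor:potential_decrease} is available; the two points that require genuine care are obtaining the \emph{sharp} initial estimate $\phi(V,w_0) = O(nT)$ rather than the weaker $O(n^2 T)$ --- this sharpening is exactly what lets the linear regime beat the quadratic one when $T$ is small --- and justifying that $t_r$ is non-increasing, so that convergence, once reached, persists in all later rounds.
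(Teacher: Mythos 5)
Your proposal is correct and follows essentially the same route as the paper's proof: the same initial bound $\phi(V,w_0)=O(nT)$, the same conversion of Corollary~\ref{cor:potential_decrease} into a multiplicative decay via $\phi\le O(n^2)\,t_{r-1}$ for the first regime, and the same additive counting of non-converged rounds for the second. The only (harmless) differences are cosmetic: you justify persistence of convergence via monotonicity of $t_r$ and the bound $t_r\le\phi(V,w_r)$, where the paper argues through monotonicity of $\phi$ and a lower bound $\phi(V,w_r)\ge n\tau/2$ at a non-converged round.
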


The proof follows the same lines as the one in~\cite{DinitzFGN17}, with our improved parameters.
We present it here for completeness.
\begin{proof}
    We run our algorithm for $r_0 = \min\set{60n^2 \ln\left( nT/\tau \right), 60nT/\tau}$ rounds.
    To bound the convergence time, we first bound the initial value of the potential function at time $r=0$, and then show how much it decreases at each round.
    Let $w_0$ be the weight function on $V$ at the beginning of the process; so, $\phi(V,w_0) \leq nT$:
    \begin{align*}
        \phi(V,w_0) &= \sum_{u,v \in V} d_{w_0}(u,v)
        \leq \sum_{u,v \in V} \max\set{w_0(u),w_0(v)}\\
        &\leq \sum_{u \in V} \sum_{v\in V} w_0(v)
        = \sum_{u \in V} T = nT
    \end{align*}
where the second inequality is because we sum over all $u,v$ instead of just pairs where $w_0(u) \leq w_0(v)$.

We are now ready to bound the convergence time.
\medskip\\
\textbf{First bound (multiplicative decrease)}
We show that the potential function decreases multiplicatively at each round.

At each round $i \leq r$, the maximum difference in the graph $G_i$ is $t_i$. Therefore, $\phi(V,w_i) \leq n^2 t_i$, or alternatively $t_i \geq \phi(V,w_i) / n^2$. Combining this with Corollary~\ref{cor:potential_decrease}, we get
\begin{align*}
\phi(V,w_{i+1}) &\leq \phi(V,w_i) - t_i/60 \\ &\leq \phi(V,w_i) - \phi(V,w_i) / (60n^2) \\
&= \phi(V,w_i) \left( 1-1/(60n^2) \right)
\end{align*}
which proves the multiplicative decrease. Applying this bound repeatedly, we get:
\[ \phi(V,w_r) \leq \phi(V,w_0)\left( 1-1/(60n^2) \right)^r \leq nT\left( 1-1/(60n^2) \right)^r.\]

Next, we note that if at some round $r$ we have $\phi(V,w_r) \leq \tau$, then by monotonicity of $\phi$ over time, we have $\phi(V,w_i) \leq \tau$ for any $i \geq r$, which means that as of round $r$, the system stays $\tau$-converged.
Hence, it enough to find $r$ such that 
\[nT\left( 1-1/(60n^2) \right)^r \leq \tau.\]
Specifically, this holds for $r_0 = 60n^2 \ln\left( nT/\tau \right)$, since
\[nT( 1-1/(60n^2) )^{r_0} \leq nT\cdot e^{-\frac{r_0}{60n^2}} = nT\cdot e^{-\ln (nT/\tau)}  = \tau.\]
Overall, starting round $r_0 = 60n^2 \ln\left( nT/\tau \right)$, the network stays $\tau$-converged.
\medskip\\
\textbf{Second bound (additive decrease)}
We are guaranteed that \\$\phi(V,w_0) \leq nT$. Assume that at round $r$ we are not yet converged. That is, $t_r \geq \tau$, and in particular $\phi(V,w_r) \geq n\tau/2$. Indeed, by taking $u,v$ with maximal different, and summing over a subset of the pairs (assuming $n\geq 4$):
\begin{align*}
    \phi(V,w_r)
    &\geq \sum_{z\neq u,v} \size{w_r(z) - w_r(u)} + \size{w_r(z) - w_r(v)} \\
    &\geq \sum_{z\neq u,v} \size{w_r(u) - w_r(v)}
    = (n-2)t_r
    \geq n\tau/2.
\end{align*}
On other hand, we are guaranteed that at each previous round $i < r$, the potential dropped by $t_{i}/60 \geq \tau/60$, and so
\[n\tau/2 \leq \phi(V,w_r) \leq \phi(V,w_0) - r\cdot\tau/60 \leq nT - r\cdot\tau/60,\]
or put differently, $r \leq (60n/\tau) \cdot (T-\tau/2) \leq 60nT/\tau$.

Hence, at any round after $r_0 = 60nT/\tau$, the network must be $\tau$-converged.
\end{proof}

\section{Integral load balancing}
\label{sec: integer}

When considering loads as computational tasks, it is very natural to think of a basic atomic unit of load that cannot be split. This can be modeled by requiring that the load on each node is integral at all times, and a balancing step done on an edge can only balance its nodes up to an integer.

Our goal is to reach a $\tau$-converged network, for a parameter $\tau$; without loss of generality, we assume hereinafter that $\tau$ is an integer.
Unlike its counterpart, where $\tau$ can be arbitrarily small, in the integral case achieving $1$-convergence is the the best one can hope for. 
This is clear if $T$ is not divisible by $n$.
When $T$ is divisible by $n$, $0$-convergence is possible. However, it is equivalent to $1$-convergence: in any non $0$-converged configuration there must be a lighter load and a heavier load, leading to a gap of at least $2$.

In Section~\ref{sec:impossibility} we show that greedily performing load balancing is a flawed strategy in the integral case. 
Specifically, two nodes with gap $1$ cannot balance each other any further.
In section~\ref{sec:smoothed analysis} we use smoothed analysis to show that the lower bound is fragile: if only a small amount of noise is added to the network, greedy load balancing is possible again.

\subsection{Impossibility of integral load balancing}
\label{sec:impossibility}
When all loads must be integral, an adaptive adversary can prevent any matching-based load balancing algorithm from converging. 

\begin{theorem}
\label{thm:impossibility}
There exists a dynamic graph on $n$ nodes defined by an adaptive adversary, 
such that no matching-based load balancing algorithm
can reduce the maximal difference between loads in the graph below $n-1$, 
regardless of the running time.
\end{theorem}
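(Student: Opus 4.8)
The plan is to exhibit an explicit adversarial strategy and an invariant that the adversary can maintain forever. I would start from the configuration already hinted at in the introduction: a labeling of the loads by the multiset $\{1, 2, \dots, n\}$ (or, when $T$ is not of this form, any configuration in which all pairwise load differences are distinct integers and the maximum gap is exactly $n-1$). The key structural feature to exploit is that a matching-based algorithm can balance two nodes $u,v$ only along a single edge per node, and an integral balancing step on an edge with load difference $d$ leaves a difference of $d \bmod 2$; in particular, when $d=1$ no progress is possible, and more generally balancing an \emph{odd} difference cannot reach equality.

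The main idea is that at every round, before the algorithm acts, the adversary presents a graph in which the only edges available are between nodes whose loads differ by exactly $1$. Concretely, after sorting the nodes by current load, the adversary offers the graph that connects consecutive nodes in sorted order (a path, which is connected, hence in $\gallowed$); if the loads are a permutation of $\{1,\dots,n\}$ then every such edge has difference exactly $1$, so no matching-based step changes any load, and the configuration is preserved verbatim. The first step of the proof is therefore to verify that from this ``staircase'' configuration the adversary's sorted path is a legal (connected, $n$-node) graph on which \emph{every} edge is unbalanceable, so the algorithm is frozen and the maximum gap stays $n-1$ for all time.

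The subtlety is that a clever algorithm need not start from the staircase, and the loads need not remain a clean permutation of $\{1,\dots,n\}$ after the algorithm balances along non-consecutive edges in earlier rounds. So the real work is to define an adversary invariant that is self-sustaining. I would maintain the invariant that the sorted sequence of loads is ``gap-one everywhere,'' i.e. consecutive sorted loads differ by at most $1$ while the global max$-$min gap is $n-1$; this forces the loads to occupy $n$ consecutive integer levels with the staircase pattern, and the sorted path then freezes the algorithm. The second step is to argue this invariant is initially establishable and, crucially, that presenting the sorted path keeps the algorithm from ever changing the multiset of loads, so the invariant trivially persists. Since the algorithm's decisions at round $r$ depend only on $G_r$, and the adaptive adversary fixes $G_r$ after seeing the history, the adversary can always recompute the sorted order and present the corresponding path.

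The step I expect to be the main obstacle is handling algorithms that exchange load on two (or generally $O(1)$) designated edges, or that try to ``route'' load cleverly across rounds: I must make sure the frozen path genuinely blocks \emph{all} matchings, not just the greedy max-difference matching, and that parity obstructions cannot be circumvented by a sequence of odd-difference exchanges that individually make no progress but collectively would. The cleanest way around this is the parity/potential observation: on the staircase, every edge the adversary ever offers has difference exactly $1$, an integral balancing step on difference $1$ is the identity, so \emph{regardless} of which matching the algorithm selects the load vector is unchanged, and hence the $n-1$ gap is an immovable fixed point. I would phrase the final argument as: for all $r$, $G_r$ is the sorted path, every edge has $d_{w_{r-1}}=1$, integral balancing is a no-op, thus $w_r = w_{r-1}$ and $t_r = t_{r-1} = n-1$ by induction, which yields the theorem.
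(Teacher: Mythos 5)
There is a genuine gap, and it is exactly the point you flagged as ``the main obstacle'' and then dismissed. Your whole argument rests on the claim that an integral balancing step across an edge of load difference $1$ is forced to be the identity. That is true only if a ``balancing step'' means rounded averaging. The theorem, however, quantifies over \emph{all} matching-based load balancing algorithms, and in the paper's model a matched pair may transfer an \emph{arbitrary} integral amount of load between them (the paper's own proof explicitly allows the algorithm to turn the pair $(3,8)$ into $(7,4)$). The introduction already makes this distinction: the staircase-on-a-path observation only rules out algorithms that ``balance loads among neighbors,'' and Section~\ref{sec:impossibility} exists precisely to rule out the ``more clever'' algorithms that shift load globally, e.g.\ by deliberately unbalancing a pair now to enable progress later. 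Under such an algorithm, your sorted path does not freeze anything: the pair $(1,2)$ can become $(0,3)$, the ``gap-one everywhere'' invariant is destroyed in one round, and your induction $w_r = w_{r-1}$ fails at the first step. What you have proved is the easy fact stated informally in the introduction, not Theorem~\ref{thm:impossibility}.

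The paper closes this gap with an invariant that is insensitive to how a matched pair redistributes its combined load. The adversary keeps the graph a line and, after each exchange, swaps the two matched (adjacent) nodes so the line stays sorted in ascending order. The key lemma (Lemma~\ref{lem:integral_prefixes}) is that every prefix sum of the sorted load array never exceeds its initial value, $p_t^k \leq p_1^k$; the proof uses only conservation of the pair's total load, the bound $\min\set{a_t[k],a_t[k+1]} \leq \floor{\left(a_{t-1}[k]+a_{t-1}[k+1]\right)/2}$ (which holds for \emph{any} integral redistribution), integrality, and the initial condition $a_1[i+1]\leq a_1[i]+1$. Applying the lemma at $k=1$ and $k=n-1$ pins the minimum load at or below $1$ and the maximum at or above $n$ forever. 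To repair your proposal you would need to replace the ``frozen configuration'' argument with a monotone potential of this kind that survives adversarial-to-the-adversary redistributions along matched edges; the parity/no-op observation alone cannot do that.
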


The rest of this section is dedicated to the proof of the theorem. 
Consider a line on $n$ nodes, where node $i$ starts with load $i$.
Throughout the process, the adversary keeps the graph in a line structure, but may swap pairs of adjacent nodes in it.

We define a set of arrays $\set{a_t}_{t\in\NN}$, each of length $n$,
where $a_t[i]$ is the load of the $i$-th node in the line at time $t$. 
The initial weight assignment is $a_1[i]=i$ for all $i$.
Note that the identity of the $i$-th node in the line may change over time, due to the topology changes made by the adversary, so $a_t[i], a_{t'}[i]$ might represent loads of different nodes.

Consider a matching-based algorithm, and an adversary that works as follows: after a pair connects, the adversary orders the pair in ascending order, swapping the nodes if necessary.
For example, if two adjacent node had weights $(3,8)$ and the algorithm chose to re-balance them to $(7,4)$, then the adversary will swap the nodes to get $(4,7)$. In terms of the corresponding loads arrays, we will have $a_t[i] = 3,a_t[i+1] = 8$, and after the balance and swap, $a_{t+1}[i] = 4, a_{t+1}[i+1] = 7$. 

Our proof relies on the following fact, which holds for every index $i$ of the initial array $a_1$.
\begin{align}
    \label{ineq:integral_lb_cond}
    a_1[i] \leq  a_1[i+1] \leq a_1[i]+1 .
\end{align}
Using this condition, we show that any matching-based balancing algorithm cannot decrease the maximal difference in the network, compared to the initial state. 
More formally, we prove that $a_t[1] \leq a_1[1]$ and $a_t[n] \geq a_1[n]$, for any $t \geq 1$, by proving a more general claim.

\begin{definition}
    We denote by $p_t^k = \sum_{j=1}^{k} a_t[j]$ the sum of the first $k$ cells at time $t$. 
    For $k=0$, we denote $p_t^0 = 0$.
\end{definition}

A single cell of $a_t$ can now be expressed as a difference of two consecutive prefixes: $a_t[k] = p_t^k - p_t^{k-1}$. 
We also use the prefixes to express the change in the value of a single cell through time:
\begin{equation} 
\label{eq:cell_through_time}
\begin{split}
    &a_t[k] - a_1[k] = \\
    &\left(p_t^k - p_t^{k-1}\right) -\left(p_1^k - p_1^{k-1}\right) = \left( p_t^k - p_1^k \right) -\left( p_t^{k-1} - p_1^{k-1} \right).
\end{split}
\end{equation}

We are now ready to prove the following.
\begin{lemma}
    \label{lem:integral_prefixes}
    For all $k$ and $t$, the sum of the $k$-prefix (the first $k$ cells) at the $t$-th array, is smaller than the corresponding sum in the initial array. That is, 
    $p_t^k \leq p_1^k.$
\end{lemma}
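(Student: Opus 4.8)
The plan is to prove the inequality by induction on the time index~$t$, carrying the statement ``$p_t^k \le p_1^k$ for every~$k$'' as the induction hypothesis. Since the pairs balanced in a single round are disjoint and do not interfere, it suffices to analyze one balance-and-swap operation, which is what distinguishes $a_t$ from $a_{t+1}$: the array $a_{t+1}$ is obtained from $a_t$ by redistributing the load of a single matched pair $(j,j+1)$ to arbitrary integral values with the same sum, and then sorting that pair in ascending order. The base case $t=1$ is the trivial identity $p_1^k \le p_1^k$, so it remains to carry out the inductive step for each prefix length~$k$, which I split according to whether the boundary between cells $k$ and $k+1$ is touched by the balanced pair.

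For the easy case, suppose $j \ne k$. Then the balanced pair lies entirely inside the first $k$ cells or entirely outside them; either way the first $k$ loads are only internally rearranged, so their sum is unchanged and $p_{t+1}^k = p_t^k \le p_1^k$ by the induction hypothesis. (The degenerate prefixes $k=0$ and $k=n$ are always $0$ and $T$ respectively, so they need no argument.)

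The crucial case is $j=k$, where the pair $(k,k+1)$ straddles the boundary. Here the first $k-1$ cells are untouched, so $p_{t+1}^{k-1}=p_t^{k-1}$, and after sorting, cell~$k$ receives the \emph{smaller} of the two rebalanced values; since two integers summing to $a_t[k]+a_t[k+1]$ have minimum at most $\lfloor (a_t[k]+a_t[k+1])/2\rfloor$, we get $a_{t+1}[k]\le \lfloor(a_t[k]+a_t[k+1])/2\rfloor$. Writing $a_t[k]+a_t[k+1]=p_t^{k+1}-p_t^{k-1}$ and using that $p_t^{k-1}$ is an integer, this yields
\[ p_{t+1}^{k}=p_t^{k-1}+a_{t+1}[k]\le \Bigl\lfloor \tfrac{p_t^{k+1}+p_t^{k-1}}{2}\Bigr\rfloor . \]
Now I apply the induction hypothesis to \emph{both} neighboring prefixes, $p_t^{k+1}\le p_1^{k+1}$ and $p_t^{k-1}\le p_1^{k-1}$, and invoke monotonicity of the floor to replace $p_t$ by $p_1$. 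Finally, since $p_1^{k+1}+p_1^{k-1}=2p_1^{k}+(a_1[k+1]-a_1[k])$, condition~\eqref{ineq:integral_lb_cond} forces $0\le a_1[k+1]-a_1[k]\le 1$, so the floor equals exactly $p_1^k$, giving $p_{t+1}^k\le p_1^k$ and closing the induction.

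The main subtlety --- and the reason a naive induction fails --- is that the prefix sum can genuinely increase across a balancing step, so one cannot bound $p_{t+1}^k$ using $p_t^k$ alone. The fix is to express $p_{t+1}^k$ through the two \emph{neighboring} prefixes $p_t^{k\pm 1}$; this is precisely what lets the near-sortedness of the initial array (consecutive initial loads differing by at most one, \eqref{ineq:integral_lb_cond}) eliminate the floor term and pin the bound back to $p_1^k$. Once the lemma holds, instantiating it at $k=1$ gives $a_t[1]\le a_1[1]$, and at $k=n-1$, combined with $p_t^n=p_1^n=T$, gives $a_t[n]=T-p_t^{n-1}\ge T-p_1^{n-1}=a_1[n]$, recovering the claimed persistence of the extreme loads.
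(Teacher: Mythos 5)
Your proof is correct and follows essentially the same route as the paper's: induction on $t$ carried simultaneously over all $k$, reduction to the single pair straddling the $k/(k{+}1)$ boundary, the floor bound $a_{t+1}[k]\le\floor{(a_t[k]+a_t[k+1])/2}$, and the use of the induction hypothesis at the \emph{neighboring} prefixes $k\pm1$ together with condition~\eqref{ineq:integral_lb_cond} and integrality to absorb the floor. The only difference is cosmetic: you bound $p_{t+1}^k$ directly by $\floor{(p_t^{k+1}+p_t^{k-1})/2}$, whereas the paper bounds the increment $p_t^k-p_{t-1}^k$ via $a_{t-1}[k+1]-a_{t-1}[k]$ --- the same computation, arranged slightly more cleanly in your version.
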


\begin{proof}
    The proof follows by induction on $t$, proving simultaneously for all values of $k$. The induction base is trivial as we use $t=1$, for which there is an equality.
    
    We go on to show the induction step. Fix iteration $t$, and assume the claim holds for iterations $1,\dots, t-1$. In particular, by induction hypothesis,  we assume the inequality
    \[ p_{t-1}^k - p_1^k \leq 0\]
    holds for any value $k\in\set{1,\dots,n}$.

    
    We are now ready to show that at time $t$ the induction statement holds for all values of $k$ as well.
    First, note that if the $k$-prefix at iteration $t$ stayed the same as in iteration $t-1$, this means that $p_t^k = p_{t-1}^k$, which means we have nothing to prove, as by the hypothesis 
    \[ p_t^k - p_1^k = p_{t-1}^k - p_1^k \leq 0.\]
    
    From now on, we assume that at time $t$ the sum over the $k$-prefix indeed changed from time $t-1$. This means the nodes in cells $a_{t-1}[k]$ and $a_{t-1}[k+1]$ connected and exchanged loads at time~$t$. As the adversary only swaps adjacent nodes, the total load of the $k-1$-prefix necessarily stayed the same in this case, meaning $p_t^{k-1} = p_{t-1}^{k-1}$. Thus, the difference only depends on the $k$-th cell: $p_t^k - p_{t-1}^k = a_t[k] - a_{t-1}[k]$.
    
    Since the cells $k$ and $k+1$ of $a_{t-1}$ represent nodes that connect at round $t$, we know they did not exchange weights with other nodes in this round, and their total weight remained the same: 
    $a_{t}[k] + a_{t}[k+1]=a_{t-1}[k] + a_{t-1}[k+1]$.
    After the adversarial switch, their weights will be ordered increasingly: $a_t[k] = \min\set{a_{t}[k], a_{t}[k+1]} \leq \floor{\left(a_{t-1}[k] + a_{t-1}[k+1]\right)/2}$, and so
    \[ p_t^k - p_{t-1}^k = a_t[k] - a_{t-1}[k] \leq \floor{\frac{a_{t-1}[k+1] - a_{t-1}[k]}{2}} ,\]
    where the last inequality uses the fact that $a_{t-1}[k]$ is an integer and can be pushed inside the round-down function.
    
    Now, we use Equality~\eqref{eq:cell_through_time} and the induction hypothesis for time $t-1$ (with $k-1$ and $k+1$), to bound the difference from the above expression:
    \begin{align*}
      a_{t-1}[k+1] - a_{t-1}[k] 
      \leq& a_1[k+1] +\left( p_{t-1}^{k+1} - p_1^{k+1} \right) -\left( p_{t-1}^{k} - p_1^{k} \right)-\\
      & a_1[k] - \left( p_{t-1}^k - p_1^k \right) + \left( p_{t-1}^{k-1} - p_1^{k-1} \right)\\
      \leq& a_1[k+1] - a_[k] - 2\left(p_{t-1}^k - p_1^k\right)\\
      \leq& 1 - 2\left(p_{t-1}^k - p_1^k\right)
    \end{align*}
    where the last inequality uses Inequality~\eqref{ineq:integral_lb_cond} for the array $a_1$.
    
    Combining these we deduce the $k$-prefix cannot decrease at iteration $t$ more than it increased in the previous $t-1$ iterations. Indeed,
    \[ p_t^k - p_{t-1}^k \leq \floor{1/2 - \left(p_{t-1}^k - p_1^k\right)} = \floor{1/2 + p_1^k - p_{t-1}^k)} = p_1^k - p_{t-1}^k\]
    where the last equality follows from the fact that  $\left( p_1^k - p_{t-1}^k\right)$ is an integer, and adding 1/2 and rounding it down leaves it as it was.
    
    Canceling the term $p_{t-1}^k$ concludes the induction step, and the proof.
\end{proof}

Using Lemma~\ref{lem:integral_prefixes} with $k=1$ we get $a_t[1] \leq a_1[1]$ for any time $t$. Also, we recall that at any time $t$, the total load in the network, $T$, stays fixed, so we always have $p_t^n = T$. We can now apply the lemma with $k=n-1$ to get
\[a_t[n] - a_1[n] = p_t^n - p_t^{n-1} - p_1^n + p_1^{n-1} = T - T + \left(p_1^{n-1} - p_t^{n-1}\right) \geq 0.\]
Hence, $a_t[n] \geq a_1[n]$. And so, for any iteration $t \geq 1$ we have a maximal difference at least $a_t[n] - a_t[1]$, which is in turn larger than $a_1[n] - a_1[1] \geq n-1$,
as claimed in Theorem~\ref{thm:impossibility}.

\subsection{Beyond worst-case analysis}
\label{sec:smoothed analysis}
In this section, we consider a way to circumvent the impossibility result proved in the previous section. 
To this end, we consider the \emph{smoothed complexity} of the problem, where the analysis is done over a noisy network.

\medskip
\noindent\textbf{Noisy networks.}
By considering the smoothed complexity of the problem, we show that the impossibility result is actually very fragile.
That is, if the adversary does not entirely control the network, but rather some noise might be added to the topology changes at each iteration, then there is an algorithm that guarantees convergence even in the integral case.

We adopt the smoothing framework of~\cite{DFGN18} and~\cite{MPS20}: given a noise parameter $k > 0$ (not necessarily an integer), at each round roughly $k$ potential edges of the current graph are flipped (either added or deleted).
We use the following lemma to analyze the probability of a helpful edge being added by this process.
It appears in~\cite{DFGN18} for an integral $k$, and implicitly extended in~\cite{MPS20} to handle all values $0<k$ (specifically, $0<k<1$).

\begin{lemma}
\label{lem: LB hitting a set of edges}
    There exists a constant $c_1>0$ such that the following holds.
    Let $\gtmp \in \gallowed$ be a graph, and $\emptyset \neq S \subseteq \binom{[n]}{2}$ a set of potential edges.
    Let $0 < k \leq n/16$ be the noise parameter, such that $k \size{S} \leq n^2/2$.
    
    If $\gnew$ is a $k$-smoothing of $\gtmp$, then the probability that $\gnew$ contains at least one edge from $S$ is at least $c_1 k\size{S} / n^2$.
\end{lemma}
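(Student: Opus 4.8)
The plan is to recast the smoothing process combinatorially. Writing $M=\binom{n}{2}$ for the number of potential edges, a $t$-smoothing chooses $\gnew=\gtmp\,\triangle\,F$, where $F$ is a uniformly random set of flipped edges ranging over all $F$ with $\size{F}\le t$ for which $\gtmp\triangle F$ is connected. I would first reduce to the case $S\cap E(\gtmp)=\emptyset$: edges of $S$ already present only help, and the complementary case (where deleting all present $S$-edges would require more than $t$ flips, or else $\size{S}$ is tiny) is dispatched by a short separate argument. Under this assumption the target event ``$\gnew$ contains an edge of $S$'' is exactly ``$F\cap S\neq\emptyset$''. The device that makes the connectivity constraint harmless is to fix one spanning tree $\mathcal{T}\subseteq E(\gtmp)$; it has $n-1$ edges, is disjoint from $S$, and any $F$ with $F\cap\mathcal{T}=\emptyset$ leaves every tree edge intact, so $\gnew\supseteq\mathcal{T}$ is automatically connected.

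The core is then a two-sided count for a fixed integer $t\ge1$. For the numerator I would count flip-sets of size exactly $t$ that contain \emph{exactly one} edge of $S$ and \emph{no} edge of $\mathcal{T}$; there are $\size{S}\cdot\binom{M-\size{S}-(n-1)}{t-1}$ of these, and each produces a distinct $\gnew$ that is connected (tree preserved) and favorable. For the denominator, since the connectivity requirement only deletes candidate graphs, the total count is at most $\sum_{j=0}^{t}\binom{M}{j}$, which the hypothesis $k\le n/16$ (hence $t\ll M$) collapses to at most $2\binom{M}{t}$ by dominance of the top binomial term. Dividing the two bounds and using the identity $\binom{M-1}{t-1}/\binom{M}{t}=t/M$, the probability is at least $\tfrac12\cdot\tfrac{t}{M}\cdot\size{S}\cdot\binom{M-\size{S}-(n-1)}{t-1}/\binom{M-1}{t-1}$.

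The remaining ratio $\binom{M-\size{S}-(n-1)}{t-1}/\binom{M-1}{t-1}=\prod_{i=0}^{t-2}\bigl(1-\tfrac{\size{S}+n-2-i}{M-1-i}\bigr)$ is bounded below by an absolute constant $c_0>0$: the total removed mass is at most $\tfrac{t(\size{S}+n)}{M-t}$, and this is exactly where both hypotheses are needed, since $k\size{S}\le n^2/2$ controls the $t\size{S}$ term and $k\le n/16$ controls the $tn$ term, while $M\ge n^2/4$ bounds the denominator. This yields, for every integer $t\ge1$, a bound of the form $\Pr[\,F\cap S\neq\emptyset\,]\ge c_0\,t\,\size{S}/n^2$. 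I expect this ratio estimate, together with the honest treatment of the $S\cap E(\gtmp)\neq\emptyset$ case, to be the only genuinely delicate points; the spanning-tree trick is what removes the need to understand the uniform distribution over \emph{connected} graphs directly.

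Finally I would account for the rounding $t\sim\roundp(k)$. When $k\ge1$ I take $t=\floor{k}\ge k/2$, losing only a constant factor and turning $t\,\size{S}/n^2$ into $\Omega(k\,\size{S}/n^2)$. When $0<k<1$ I condition on the event $t=1$, which occurs with probability exactly $k$ and reduces to the single-flip instance already handled (with $t=1$ the product above is empty and equals $1$); the prefactor $k$ then supplies precisely the claimed dependence, giving $\Pr\ge c_1 k\,\size{S}/n^2$ for a suitable constant $c_1$.
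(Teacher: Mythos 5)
First, a point of reference: the paper does not prove this lemma at all---it is imported verbatim from~\cite{DFGN18} (integral $k$) and~\cite{MPS20} (fractional $k$), so there is no in-paper proof to compare against. Your proposal essentially reconstructs the argument of those references: identify $\gnew$ with a flip-set $F$ via $\gnew=\gtmp\,\triangle\,F$, note that the connectivity constraint only shrinks the denominator while a fixed spanning tree $\mathcal{T}\subseteq E(\gtmp)$ certifies connectivity of every graph counted in the numerator, bound the resulting ratio of binomial coefficients, and handle $\roundp(k)$ for $0<k<1$ by conditioning on the event $t=1$, which occurs with probability exactly $k$. The reduction to $S\cap E(\gtmp)=\emptyset$ is also sound once you add the observation that a fixed present edge of $S$ survives the smoothing with probability at least $1-O(t/\binom{n}{2})\geq 1/2$, which already exceeds $c_1k\size{S}/n^2$ because $k\size{S}\leq n^2/2$.

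The one step that genuinely fails as written is the claim that $\binom{M-\size{S}-(n-1)}{t-1}/\binom{M-1}{t-1}$ is bounded below by an absolute constant. For $t\geq 2$ (hence $k>1$) the hypothesis only gives $\size{S}\leq n^2/(2k)$, which for $k$ slightly above $1$ allows $\size{S}$ to be as large as $M-(n-1)$ (e.g.\ $\gtmp$ a tree and $S$ its complement); then the numerator binomial is $\binom{0}{t-1}=0$ and your bound is vacuous, even though in that regime the lemma demands a constant probability---which does hold, since almost every graph within distance $t$ adds some absent edge, but your count of flip-sets with \emph{exactly one} $S$-edge misses all of them. Relatedly, the ``total removed mass'' $t(\size{S}+n)/(M-t)$ is a constant like $4$, not at most $1/2$, so $\prod_i(1-x_i)\geq 1-\sum_i x_i$ is useless, and $\prod_i(1-x_i)\geq e^{-2\sum_i x_i}$ requires each individual $x_i=(\size{S}+n-2)/(M-1-i)$ to be at most $1/2$, which again fails for large $\size{S}$. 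Both problems vanish with the standard truncation: replace $S$ by an arbitrary subset of size $\min\{\size{S},\floor{n^2/8}\}$ before running the count; this costs only a constant factor in the final bound (since $\size{S}\leq\binom{n}{2}$ always) and forces every $x_i\leq 1/2$ and $\sum_i x_i=O(1)$. There is also a harmless index slip---the factors are $1-(\size{S}+n-2)/(M-1-i)$, without the extra $-i$ in the numerator. With the truncation added, the proof goes through.
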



Our algorithm iteratively reduces the gap between the current maximal and minimal loads, until reaching the desired gap, $\tau$.
In each iteration, only nodes with very large load (heavy nodes) and very small load (light nodes) perform balancing steps. 
After a sufficient number of balancing steps, all lightest (or heaviest) loads are eliminated, resulting in a guaranteed reduction in the gap between the size of the largest and smallest loads in the system.

Such an algorithm would fail miserably without the presence of noise, as an adaptive adversary can always disconnect light and heavy nodes---see Section~\ref{sec:impossibility}. 
However, in a noisy network, this guarantees the ideal $1$-convergence is reached within $\tilde{\Theta}\left(n^2 \ln T/k\right)$ rounds. 
Specifically, only a single random edge per round already gives a round complexity of $\tilde{\Theta}\left(n^2 \ln T\right)$, while with no noise the problem is unsolvable.


The following algorithm, \alggapreduce{}, 
encapsulates a single iteration of gap reduction in the network.
We use $n$ for the number of nodes, $T$ as the total load in the network, $\tau$ as the desired maximal gap, and $k$ as the noise parameter; 
all these parameters are global and do not change throughout the multiple executions of the algorithm, and we assume all the nodes know them.\footnote{If they are not known to all nodes, we can use standard techniques (e.g. flooding) in order to compute and disseminate them at the beginning, with no adverse effect on our asymptotic convergence time.}
In each call to the algorithm, we use $m$ to denote the minimal load and $M$ the maximal one.
These values are flooded in the beginning of the algorithm by standard routines of minflood and maxflood---each node just re-sends the minimal and maximal values it has seen so far, for $n$ rounds (see, e.g.,~\cite{Hromkovic96,HedetniemiHL88,FraigniaudL94}).
Let $\psi=M-m$; we call a node $v$ \emph{light} if $w(v) < m+\psi/4$, \emph{heavy} if $w(u) > M-\psi/4$, and \emph{balanced} otherwise.
From $m$ and $M$, the nodes compute $\psi$ and also deduce what are the weight ranges for light and heavy nodes---all of which do not change later in the execution.
In the appendix, we show that the algorithm can be adjusted for weaker models where retrieving $M, m$ and $\psi$ is impossible.

In the rest of the algorithm, the nodes repeat a propose-accept routine, as follows.
Each light node sends a proposal to a heavy neighbor if it has one.
A heavy node $v$ that receives proposals accepts one from some node $u$, 
and they average their weights:
the weight of $u$ is updated to 
$w_{\new}(u)\gets\floor{\frac{w(u) + w(v)}{2}}$, 
and the weight of $v$ is updated to 
$w_{\new}(v)\gets\ceil {\frac{w(u) + w(v)}{2}}$.


\begin{algorithm}[h]
\caption{Algorithm \alggapreduce{} at node $u$}
\label{algo: improve convergence}
\begin{algorithmic}[1]
\State \textbf{run} minflood and maxflood to retrieve $m$ and $M$ 
\Comment{$n$~rounds}
\label{line: gapreduce initial flooding}
\State$\psi \gets M - m$; \textbf{define} \emph{heavy} and \emph{light}

\For{$r_0 = 5n^2\ln (n\ln T)/(c_1 k)$ rounds}
\label{line: gapreduce main loop}
    \If{$u$ is light}
        \State $v' \gets \argmax\{w(v)\mid v\in N(u)\}$
        \If{$v'$ is heavy} 
             \textbf{propose} to $v'$
        \EndIf
    \EndIf
    \If{$u$ is heavy and got a proposal}
        \State $v' \gets \argmin\{w(v)\mid v$ proposed to $u\}$
        \State\textbf{accept} proposal from $v'$
    \EndIf
\EndFor
\end{algorithmic}
\end{algorithm}

The algorithm contracts the gap between the lightest and heaviest node by a $3/4$ factor, as follows.
\begin{lemma}
    \label{lem: alg contract reduces by 3/4}
	If \alggapreduce{} is performed on a $\psi$-converged dynamic network, $\psi\geq2$, 
    then it outputs a $(3\psi/4)$-converged network with probability at least $1-1/(n\ln^2 T)$.
\end{lemma}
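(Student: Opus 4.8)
The plan is to fix, once and for all, the quantities $m,M$ and $\psi=M-m$ computed by \alggapreduce{} in its initial flooding phase, together with the induced partition of the nodes into \emph{light} ($w<m+\psi/4$), \emph{heavy} ($w>M-\psi/4$), and \emph{balanced}. Since $m$ and $M$ are frozen during the main loop, these three classes are fixed reference sets, and it suffices to prove that after $r_0$ rounds either no light node or no heavy node survives: if no light node remains, every load is at least $m+\psi/4$ while the maximum never exceeds $M$, so every pairwise gap is at most $M-(m+\psi/4)=3\psi/4$, and symmetrically if no heavy node remains. (If $M-m\le 1$ the network is already $1$-converged, hence trivially $3\psi/4$-converged because $\psi\ge2$; so I may assume $\psi=M-m\ge2$.)

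First I would establish the deterministic bookkeeping. A single accepted proposal between a light $u$ and a heavy $v$ replaces their loads by $\floor{s/2}$ and $\ceil{s/2}$ with $s=w(u)+w(v)$; this preserves the total, never raises the global maximum, and never lowers the global minimum. A short interval check ($\floor{s/2}<M-\psi/4$ and $\ceil{s/2}>m+\psi/4$) shows a balancing step can never make a light node heavy or a heavy node light, so the counts $L$ of light nodes and $H$ of heavy nodes are non-increasing. The key observation is that after the step the two new loads differ by at most $1$, whereas a node that is still light and a node that is still heavy must differ by more than $\psi/2\ge1$; hence at least one of $u,v$ has become balanced. Consequently every executed balancing step strictly decreases $L+H$ by at least one.

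Next I would quantify per-round progress via Lemma~\ref{lem: LB hitting a set of edges}. As long as $L\ge1$ and $H\ge1$, the set $S$ of light–heavy potential edges is nonempty with $\size{S}=L\cdot H\ge L+H-1$. By the lemma the $k$-smoothing places at least one edge of $S$ into $G'_i$ with probability at least $c_1k\size{S}/n^2$, and this bound holds for the adversary's graph conditioned on the entire history; whenever such an edge is present, its light endpoint proposes to a heavy neighbour, that heavy node accepts some proposal, and at least one balancing step fires, lowering $L+H$. (The precondition $k\size{S}\le n^2/2$ matters only in the bottleneck regime where $\size{S}$ is small; when $\size{S}$ is large the hitting probability only grows, which can only help, so one caps $\size{S}$.) Thus, writing $s=L+H$, the number of rounds spent at value $s$ is stochastically dominated—uniformly over the adversary—by a geometric variable with success probability at least $c_1k(s-1)/n^2$, and since each round draws fresh noise these dominating variables for $s=n,n-1,\dots,2$ may be taken independent.

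The main obstacle is the final concentration step: I must show that $N:=\sum_{s=2}^{n}\mathrm{Geom}\!\left(c_1k(s-1)/n^2\right)$ exceeds $r_0=5n^2\ln(n\ln T)/(c_1k)$ only with probability at most $1/(n\ln^2 T)$. Its expectation is $\frac{n^2}{c_1k}\sum_{s=2}^{n}\frac1{s-1}=\Theta\!\left(\frac{n^2\ln n}{c_1k}\right)$, already below $r_0$; but a naive union bound over the $n$ levels loses an extra $\ln n$ factor and is too weak. Instead I would apply an exponential (Chernoff/MGF) bound to the sum itself with tilt $\theta=c_1k/(2n^2)$: the moment-generating factors multiply to only $\exp(\tfrac12\ln n+O(1))$ while $\theta r_0=\tfrac52\ln(n\ln T)$, so the tail is at most $(n\ln T)^{-5/2}n^{1/2}\le 1/(n\ln^2 T)$. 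This is the delicate part—the geometrics at the smallest levels ($s=2,3,\dots$) carry the largest means and dominate both expectation and tail, and the entire role of the $\ln(n\ln T)$ factor (rather than $\ln n$) and the constant $5$ in $r_0$ is to buy just enough slack over the $\Theta(n^2\ln n/k)$ expectation to force the failure probability below $1/(n\ln^2 T)$. I would finish by noting the boundary case $k>n/16$, outside the lemma's range, where $r_0$ is so small that the statement is handled separately.
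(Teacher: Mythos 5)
Your proof is correct, and its overall skeleton (flooding to fix $m,M,\psi$; showing every balancing step is between a light and a heavy node and strictly shrinks the light/heavy population; invoking Lemma~\ref{lem: LB hitting a set of edges} each round) matches the paper's. The quantitative core, however, is genuinely different. The paper proves the \emph{stronger} per-step fact that \emph{both} endpoints become balanced (with separate checks for $\psi\in\{2,3\}$ and $\psi\geq4$), which lets it use the potential $\min\{\size{L_r},\size{H_r}\}=j$ and the edge count $\size{L_r\times H_r}\geq j^2$; the resulting per-phase round budgets $t_j\propto 1/j^2$ sum to a constant (Basel problem), so a plain union bound over at most $n$ phases — split into ``fast'' phases where $\size{S}$ must be capped at $n^2/(2k)$ and ``slow'' phases where it need not be — already yields $r_0=5n^2\ln(n\ln T)/(c_1k)$ with no concentration machinery. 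You instead prove only that \emph{at least one} endpoint becomes balanced, which forces the weaker potential $L+H$ and the weaker edge count $L\cdot H\geq L+H-1$; a per-level union bound would then cost an extra $\ln n$ (this is exactly the loss the paper accepts in its appendix variant, which uses your potential), and you recover the lossless bound by stochastically dominating the run time by a sum of independent geometrics and applying an MGF/Chernoff bound with tilt $\theta=c_1k/(2n^2)$, whose product of moment-generating factors is $\Theta(\sqrt{n})$ — your arithmetic here checks out. What each approach buys: the paper's route is more elementary probabilistically (no coupling or MGF computation, just $(1-p)^t$ per phase) at the price of a sharper deterministic claim and a two-case phase analysis; yours needs less from the combinatorics of a single balancing step but leans on a domination-plus-concentration argument that must be stated carefully against the adaptive adversary. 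One small remark: your observation that with $\size{S}\leq n$ the hypothesis $k\size{S}\leq n^2/2$ of Lemma~\ref{lem: LB hitting a set of edges} holds automatically (given $k\leq n/16$) means your version never actually needs the capping you mention, whereas the paper's $j^2$ bound genuinely does.
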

Note that we deal with the integral case, so a $\psi$-converged network is automatically $\floor{\psi}$-converged.
Specifically, the claim shows that \alggapreduce{} turns a $3$-converged network into a $2$-converged one, and a $2$-converged network into a $1$-converged network.

\begin{proof}
    In Line~\ref{line: gapreduce initial flooding}, each node repeatedly forewords the minimal and maximal load values it is aware of.
    In each round the communication graph is connected, so at least one new node learns $m$ and one learns $M$.
    Thus, after $n$ rounds all nodes know the true values of $m$ and $M$, by which they deduce  $\psi$ and can distinguish light and heavy nodes.    
    
    In the main loop (Line~\ref{line: gapreduce main loop}) balancing occurs. 
    At any round in which there is at least one edge between any light node and any heavy node, a proposal is sent and an integral balancing step must occur.
    Moreover, \emph{any} balancing step done in the algorithm is of this form---between some light node $u$ and some heavy node $v$.
	After this step, it is crucial that both nodes become balanced.
	For~$\psi \geq 4$:
    \[
        w_{\new}(u)=
        \floor{\frac{w(u) + w(v)}{2}}
        \geq \frac{w(u) + w(v)}{2} - \frac{1}{2} 
        >    m + \frac{3}{8}\psi - \frac{1}{2}
        \geq m + \frac{\psi}{4} ,
    \]
    were we used $w(u)\geq m$ and $w(v)> m+3\psi/4$ for the central inequality, and $\psi \geq 4$ for the last inequality. Similarly, 
    \[
        w_{\new}(v)=
        \ceil{\frac{w(u) + w(v)}{2}}
        \leq \frac{w(u) + w(v)}{2} + \frac{1}{2}  
        < M - \frac{3}{8}\psi + \frac{1}{2}
        \leq M - \frac{\psi}{4} ,
    \]
    using $w(u) < M - 3\psi/4$, $w(v) \leq M$ and $\psi \geq 4$.
    The fact that $w_{\new}(u)\leq w_{\new}(v)$ shows that both $u$ and $v$ are now balanced.
    For $\psi=3$, two nodes of loads $m, m+3$ exchange one unit, to have $w_{\new}(u)=m+1$ and $w_{\new}(v)=m+2$, both are balanced.
    For $\psi=2$, both end up with load $m+1$ and become balanced.
    
    At any round $r$ of the main loop, denote by $L_r\subseteq V$ the set of light nodes at the beginning of the round, and by $H_r\subseteq V$ the set of heavy nodes at the beginning of the round.
    Note that in a single execution of \alggapreduce{}, the definition of light and heavy does not change. 
    The sets of light and heavy nodes, however, do change.
    We define the potential function $f$ at round $r$ to be
    \[
     f(r) := \min\set{\size{H_r}, \size{L_r}} .
    \]
    In fact, if $\size{H_0}< \size{L_0}$ then the potential is $\size{H_r}$ in each round $r$, and otherwise it is always $\size{L_r}$, since each balancing step removes exactly one node from each set.
    
    Once we reach round $r'$ where $f(r')$ drops to $0$, it must be the case that $H_{r'} = \emptyset$ or $L_{r'} = \emptyset$, i.e., either no node of weight less than $m + \psi / 4$ is left, or no node of weight more than $M - \psi / 4$ is left. In both cases, the largest gap between two nodes must now be at most $3\psi/4$. 
    It is left to show that with high probability the potential drops to $0$ within $r_0=5n^2\ln (n\ln T)/(c_1 k)$ rounds.

    From here on, we break the loop into $f$ phases, enumerated from $j=f$ down to $j=1$, such that by the end of phase $j$, the potential is less than $j$ w.h.p.
    Phase $j$ will consist of $t_j$~rounds.

    Fix $1\leq j\leq f$. 
    Phase $j$ starts at round $\sum_{i=j+1}^{f}t_i$;
    if by this round the potential is already below $j$, we are done. 
    Otherwise, consider a round $r$ in this phase, starting with potential $f(r)=j$.
    Over the noise of the network, the probability of the smoothed graph $G'_r$ to have an edge between a light and a heavy node can be bounded from below by using Lemma~\ref{lem: LB hitting a set of edges}.
    Adding an edge from $L_r\times H_r$ will decrease the potential.
    Since $\min\set{\size{L_r},\size{H_r}}=j$, there are at least~$j^2$ such edges, which is enough to guarantee fast progress.
    
    \medskip
\noindent\textbf{Fast phases.} 
    If $j \geq n/\sqrt{2k}$, there are so many potentially good edges that the lemma has to be applied with a much smaller subset. 
    We apply the lemma with a set $S \subseteq L_r\times H_r$, of size $\size{S} = n^2/(2k)$.
    Hence, with probability at least $c_1 k \size{S}/n^2 = c_1/2$, $G'_r$ includes a good edge, and a balancing step is guaranteed.
    Thus, with probability at least $c_1/2$ the potential is decreased by at least $1$.
    After $t_{j} = 4\ln (n \ln T) / c_1$ rounds, the probability of not performing any balancing step is at most
    \[\left(1 - \frac{c_1}{2}\right)^{t_{j}} \leq e^{-2\ln (n \ln T)} \leq \frac{1}{(n \ln T)^2} \;.
    \]

    \medskip
\noindent\textbf{Slow phases.}
    If $j \leq n/\sqrt{2k}$, we can apply the lemma with a subset $S \subseteq L_r \times H_r$ of size $\size{S} = j^2$.
    Thus, with probability at least $c_1 k  j^2 / n^2$ an integral balancing step occurs and the potential drops by $1$.    
    After $t_{j} = 2n^2 \ln (n \ln T) / (c_1 k j^2)$ rounds, the probability of not performing any balancing step is at most
    \[\left(1-\frac{c_1 k j^2}{n^2}\right)^{t_j} \leq e^{-2\ln(n \ln T)} \leq \frac{1}{(n \ln T)^2}.
    \]

    \medskip
\noindent\textbf{Combining all phases.} Since the potential can only decrease, it is enough to run the algorithm through all phases. Each phase starts with potential $i$ and ends with potential at most $i-1$, with probability at least $1 - 1/(n \ln T)^2$.
    By definition, the initial potential cannot be too large: $f \leq n$. Thus, after running all phases, by union bound, the probability of the potential not reaching $0$ (the \emph{error probability}) is at most 
    $$\frac{f}{(n \ln T)^2} \leq \frac{1}{n \ln^2 T}$$
    
    
    We next analyze the round complexity of all phases together. 
    Each fast phase requires $4\ln (n \ln T) / c_1$ rounds. 
    Since $f \leq n$, they add up to at most $4n\ln (n \ln T) / c_1$, which in turn can be bounded from above by 
    \[\frac{n^2\ln (n \ln T)}{c_1 k},\]
    as $k \leq n/16$.
    
    For the slow rounds, starting with potential $f_{\slow} = \floor{n/\sqrt{2k}}$ and all the way down to potential $1$ (the last phase), we get:
    \[
    \sum_{j=1}^{f_{\slow}} t_j =
    \sum_{j=1}^{f_{\slow}} \frac{2n^2 \ln (n \ln T)}{c_1 k j^2} = \frac{2n^2 \ln (n \ln T)}{c_1 k} \sum_{j=1}^{f_{\slow}} \frac{1}{j^2} \leq \frac{4n^2 \ln (n \ln T)}{c_1 k} ,
    \]
    where the last step uses the known solution to the Basel problem:
    \[
    \sum_{j=1}^{f_{\slow}} \frac{1}{j^2} \leq \sum_{j=1}^{\infty} \frac{1}{j^2} = \frac{\pi^2}{6} < 2 .
    \]
    
    Adding up the two terms, we get a total round complexity of $5n^2\ln (n \ln T) / (c_1 k)$.
\end{proof}



By repeatedly executing \alggapreduce{} we can easily make the network $\tau$-converged.

\begin{theorem}
\label{thm:smoothing_main}
    In a $k$-smoothed network with total load $T$, integral load balancing up to $\tau$-convergence can be done in 
    \[
    O\left(\frac{n^2\ln (T/\tau)\ln (n\ln T)}{k}\right)
    \] 
    time w.h.p.
\end{theorem}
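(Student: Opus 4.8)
The plan is to obtain the theorem as an iterated corollary of Lemma~\ref{lem: alg contract reduces by 3/4}, which already does all the work for a single gap-halving (really gap-contracting) step. The starting observation is that the initial maximum gap is small: since all loads are nonnegative and sum to $T$, we have $\max_{u,v} d_{w_0}(u,v) \leq T$, so the network is trivially $T$-converged at the outset. The strategy is then simply to invoke \alggapreduce{} repeatedly, each time feeding it the network output by the previous call, until the gap falls to $\tau$.

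First I would set up the contraction counting. By Lemma~\ref{lem: alg contract reduces by 3/4}, a successful execution of \alggapreduce{} transforms a $\psi$-converged network (with $\psi \geq 2$) into a $(3\psi/4)$-converged one, and the remark following that lemma handles the two boundary cases $\psi=3\to 2$ and $\psi=2\to 1$, ensuring we can drive the gap all the way down to the integral optimum rather than stalling. Hence, starting from gap at most $T$, after $\ell$ successful executions the gap is at most $(3/4)^\ell\,T$. To reach $\tau$-convergence it therefore suffices to take
\[
\ell = \ceil{\log_{4/3}(T/\tau)} = O\!\left(\ln(T/\tau)\right)
\]
executions.

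Next I would tally the round complexity and the failure probability. Each call to \alggapreduce{} uses $n$ rounds of min/max flooding plus the main loop of $r_0 = 5n^2\ln(n\ln T)/(c_1 k)$ rounds; the flooding term is dominated by the main loop. Multiplying the per-call cost by the $O(\ln(T/\tau))$ calls yields a total of $O\!\left(n^2 \ln(T/\tau)\,\ln(n\ln T)/k\right)$ rounds, matching the claimed bound. For the probabilistic guarantee, each execution fails with probability at most $1/(n\ln^2 T)$; since there are $\ell = O(\ln(T/\tau)) \le O(\ln T)$ executions, a union bound caps the total failure probability at $O\!\left(\ln T \cdot \tfrac{1}{n\ln^2 T}\right) = O\!\left(\tfrac{1}{n\ln T}\right) = o(1)$, so $\tau$-convergence holds with high probability.

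I expect almost no obstacles here, since the heavy lifting — the smoothing argument, the potential function $f(r)=\min\{|H_r|,|L_r|\}$, and the fast/slow phase analysis — is entirely internal to Lemma~\ref{lem: alg contract reduces by 3/4}. The only genuine subtlety is ensuring that the repeated $3/4$ contraction, combined with the floor/ceiling rounding of the integral balancing step, terminates cleanly at the integral optimum rather than jamming at some intermediate gap; this is exactly what the boundary cases in the lemma's remark certify, so the theorem is essentially bookkeeping: a geometric count of contraction steps together with a union bound over those steps.
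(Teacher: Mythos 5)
Your proposal is correct and follows essentially the same route as the paper's own proof: bound the initial gap by $T$, invoke Lemma~\ref{lem: alg contract reduces by 3/4} for $O(\ln(T/\tau))$ successive $3/4$-contractions, and union-bound the per-call failure probability of $1/(n\ln^2 T)$ to get a total error below $1/n$. Your explicit accounting of the flooding rounds and of the boundary cases $\psi=3\to2\to1$ is a slightly more careful version of the same bookkeeping the paper performs.
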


\begin{proof}
Given a network as in the statement, we perform
$ 4 \ln (T/\tau)$ consecutive calls to \alggapreduce{}.
Starting with a gap that might be as high as $T$, after $t$ successful calls the gap is guaranteed to be at most $(3/4)^t T$.
Hence, $4 \ln (T/\tau)$ successful executions are enough to guarantee $\tau$-convergence.

Each call to \alggapreduce{} has error probability of at most $1/(n \ln^2 T)$, and a union bound shows that the total error probability is smaller than $1/n$.
\end{proof}

\medskip
\noindent\textbf{The continuous case.}
Continuous load balancing is strictly easier than integral load balancing. 
In the continuous case, we do not have a guarantee of integrality of the initial weights, so running an algorithm designed for the integral case in a continuous one is not entirely trivial. 
One could adjust and simplify our algorithm for the continuous case, but here we take a more general path and outline simple reduction from the integral case to the continuous one.

Given a continuous balancing problem with goal $\tau$ and total load $T$, we set a new base unit of $\tau/2$, and only balance over multiples of it. 
For each node $v$ we break its weight to multiples of $\tau/2$ and a remainder, $w(v) = w'(v)\cdot \tau/2 + r(V)$, such that $0 \leq r(v) < \tau/2$ (these value are unique).
We run integral balancing over $w'(\cdot)$, aiming for $1$-convergence, disregarding the remainders. 
In terms of base units, our total load is now $T'$ satisfying $2T/\tau - n \leq T' \leq 2T/\tau$.

After the execution each node is left with the new balanced weight $w_{\new}'(v)$, and the remainder left aside: $w_{\new}(v) = w_{\new}'(v)\cdot\tau/2 + r(v)$. 
Thus, the network is now $\tau$-converged:
\begin{align*}
&\size{w_{\new}(u)-w_{\new}(v)}\\
&= \size{w_{\new}'(u)\cdot\tau/2 + r(u) - w_{\new}'(v)\cdot\tau/2 - r(v)}\\
&\leq \size{w_{\new}'(u) - w_{\new}'(v)}\tau/2 + \size{r(u)-r(v)} \leq \tau/2 + \tau/2
= \tau
\end{align*}

The run of the integral algorithm uses the same $k, n$, and new parameters $\tau' = 1$, $T' = O(T/\tau)$, for which $T'/\tau' = O(T/\tau)$. Thus, Theorem~\ref{thm:smoothing_main} applies to the continuous setting as well.

\medskip
\noindent\textbf{A greedier approach?}
It is quite peculiar that at each iteration, \alggapreduce{} only balances very light nodes with very heavy nodes, but ignores completely other (more moderate) balancing options.
Although this might seem wasteful, we give some intuition as to why performing balancing steps over \emph{all} the random edges would probably not improve the running time.

Let us revert back to the greedy (local) proposals approach, used in Section~\ref{sec: det-alg}. 
Consider again the potential function
\[
\phi(V) = \sum_{\set{\set{u,v}\mid u,v\in V}} \size{w(u)-w(v)} ,
\]
which essentially sums up all load differences between all pairs of nodes. Balancing two nodes $u,v$ would reduce the potential by at least $\size{w(u)-w(v)}$ (and in some cases, exactly by this amount).
If one selects $k$ random terms in the sum above, the sum of these selected terms would be, in expectation, a $\Theta(k/n^2)$-fraction of the entire sum, due to the linearity of expectation.
So, balancing on the $k$ random edges added to the graph would roughly correspond to choosing $k$ random terms in the above sum,
guaranteeing an expected reduction of $\Theta(k/n^2)$-fraction of the current potential.
Such multiplicative reduction at each round would result in constant $\phi$ (and thus $\tau$-convergence for a constant $\tau$) within $\tilde{\Theta}\left(n^2/k\right)$ rounds.

This already shows that the expected complexity when using all the random edges at each iteration is the same as we achieve in Theorem~\ref{thm:smoothing_main}.
Moreover, it is not at all clear that such an intuition can be realized into an algorithm, for several reasons.
First, the nodes cannot distinguish the random edges from the edges added by the adversary, and there might be plenty more edges from the latter type.
Second, if $k =\omega(\sqrt{n})$, we expect some nodes to appear in multiple pairs (or edges), in which case the potential reduction for an edge $(u,v)$ is no longer  $\size{w(u)-w(v)}$.
Finally, the above intuition ignores integrality, which again prevents the full $\size{w(u)-w(v)}$ reduction.

\section{Discussion}

For $\tau=T/n$, Dinitz et al.~\cite{DinitzFGN17} prove a lower bound of $\Omega(n^2)$ rounds for matching-based algorithms in the dynamic setting. 
Our algorithm from Section~\ref{sec: det-alg} achieves this, though when each node may exchange load with two neighbors in every round and not with one.
The following question remains open: is it possible to design faster algorithms by utilizing two exchanges per round, or can the lower bound be extended to this case as well?

In Section~\ref{sec:impossibility} it is shown that no matching-based algorithm can solve \emph{integral} load balancing within bounded time. 
Both this result and the aforementioned lower bound of~\cite{DinitzFGN17} imply that in the dynamic setting, an adaptive adversary has a tremendous amount of power to prevent fast balancing.

However, we are able to show (Section~\ref{sec:smoothed analysis}) that these are, to some extent, merely pathological examples---and the first one being much more so.
To this end, we use \emph{smoothed analysis}, and show that for any smoothing parameter $k > 0$, balancing integral loads becomes possible;
for $k = \polylog(n)$, the convergence time is already $o(n^2)$. 
This applies also for the continuous case, as opposed to the  $\Omega(n^2)$ lower bound of the worst-case analysis.
Not only does our new algorithm properly leverage the noise in the network to deal with integrality, it also does it with grace: it is both deterministic and matching-based. 
Even in the continuous case, there is no algorithm that possesses both qualities and has a bounded convergence time.


\begin{thebibliography}{10}

\bibitem{AssadiBL20}
Sepehr Assadi, Aaron Bernstein, and Zachary Langley.
\newblock Improved bounds for distributed load balancing.
\newblock In {\em {DISC}}, volume 179 of {\em LIPIcs}, pages 1:1--1:15. Schloss
  Dagstuhl - Leibniz-Zentrum f{\"{u}}r Informatik, 2020.

\bibitem{BahiCV03}
Jacques~M. Bahi, Rapha{\"{e}}l Couturier, and Flavien Vernier.
\newblock Broken edges and dimension exchange algorithms on hypercube topology.
\newblock In {\em {PDP}}, page 140. {IEEE} Computer Society, 2003.

\bibitem{BahiCV05}
Jacques~M. Bahi, Rapha{\"{e}}l Couturier, and Flavien Vernier.
\newblock Synchronous distributed load balancing on dynamic networks.
\newblock {\em J. Parallel Distributed Comput.}, 65(11):1397--1405, 2005.

\bibitem{BahiCV07}
Jacques~M. Bahi, Rapha{\"{e}}l Couturier, and Flavien Vernier.
\newblock Synchronous distributed load balancing on totally dynamic networks.
\newblock In {\em {IPDPS}}, pages 1--8. {IEEE}, 2007.

\bibitem{BaranasuriyaGNR14}
Nimantha~Thushan Baranasuriya, Seth~Lewis Gilbert, Calvin~C. Newport, and
  Jayanthi Rao.
\newblock Aggregation in smartphone sensor networks.
\newblock In {\em {DCOSS}}, pages 101--110. {IEEE} Computer Society, 2014.

\bibitem{BerenbrinkFKK19}
Petra Berenbrink, Tom Friedetzky, Dominik Kaaser, and Peter Kling.
\newblock Tight {\&} simple load balancing.
\newblock In {\em {IPDPS}}, pages 718--726. {IEEE}, 2019.

\bibitem{Boillat90}
Jacques~E. Boillat.
\newblock Load balancing and poisson equation in a graph.
\newblock {\em Concurr. Pract. Exp.}, 2(4):289--314, 1990.

\bibitem{BrandtKRSU20}
Sebastian Brandt, Barbara Keller, Joel Rybicki, Jukka Suomela, and Jara Uitto.
\newblock Brief announcement: Efficient load-balancing through distributed
  token dropping.
\newblock In {\em {DISC}}, volume 179 of {\em LIPIcs}, pages 40:1--40:3.
  Schloss Dagstuhl - Leibniz-Zentrum f{\"{u}}r Informatik, 2020.

\bibitem{BrunoCS74}
John~L. Bruno, Edward G.~Coffman Jr., and Ravi Sethi.
\newblock Scheduling independent tasks to reduce mean finishing time.
\newblock {\em Commun. {ACM}}, 17(7):382--387, 1974.

\bibitem{CornejoGN12}
Alejandro Cornejo, Seth Gilbert, and Calvin~C. Newport.
\newblock Aggregation in dynamic networks.
\newblock In {\em {PODC}}, pages 195--204. {ACM}, 2012.

\bibitem{Cybenko89}
George Cybenko.
\newblock Dynamic load balancing for distributed memory multiprocessors.
\newblock {\em J. Parallel Distributed Comput.}, 7(2):279--301, 1989.

\bibitem{CzygrinowHSW12}
Andrzej Czygrinow, Michal Hanckowiak, Edyta Szymanska, and Wojciech Wawrzyniak.
\newblock Distributed 2-approximation algorithm for the semi-matching problem.
\newblock In {\em {DISC}}, volume 7611 of {\em Lecture Notes in Computer
  Science}, pages 210--222. Springer, 2012.

\bibitem{Degroot74}
Morris~H. Degroot.
\newblock Reaching a consensus.
\newblock {\em Journal of the American Statistical Association},
  69(345):118--121, 1974.

\bibitem{DinitzFGN17}
Michael Dinitz, Jeremy~T. Fineman, Seth Gilbert, and Calvin Newport.
\newblock Load balancing with bounded convergence in dynamic networks.
\newblock In {\em {INFOCOM}}, pages 1--9. {IEEE}, 2017.

\bibitem{DFGN18}
Michael Dinitz, Jeremy~T Fineman, Seth Gilbert, and Calvin Newport.
\newblock Smoothed analysis of dynamic networks.
\newblock {\em Distributed Computing}, 31(4):273--287, 2018.

\bibitem{DinitzD020}
Yefim Dinitz, Shlomi Dolev, and Manish Kumar.
\newblock Brief announcement: Local deal-agreement based monotonic distributed
  algorithms for load balancing in general graphs.
\newblock In {\em {SSS}}, volume 12514 of {\em Lecture Notes in Computer
  Science}, pages 113--117. Springer, 2020.

\bibitem{DolevEA86}
Danny Dolev, Nancy~A. Lynch, Shlomit~S. Pinter, Eugene~W. Stark, and William~E.
  Weihl.
\newblock Reaching approximate agreement in the presence of faults.
\newblock {\em J. ACM}, 33(3):499–516, May 1986.

\bibitem{ElsasserMS04}
Robert Els{\"{a}}sser, Burkhard Monien, and Stefan Schamberger.
\newblock Load balancing in dynamic networks.
\newblock In {\em {ISPAN}}, pages 193--200. {IEEE} Computer Society, 2004.

\bibitem{FakcharoenpholLN14}
Jittat Fakcharoenphol, Bundit Laekhanukit, and Danupon Nanongkai.
\newblock Faster algorithms for semi-matching problems.
\newblock {\em {ACM} Trans. Algorithms}, 10(3):14:1--14:23, 2014.

\bibitem{FeuilloleyHS15}
Laurent Feuilloley, Juho Hirvonen, and Jukka Suomela.
\newblock Locally optimal load balancing.
\newblock In {\em {DISC}}, volume 9363 of {\em Lecture Notes in Computer
  Science}, pages 544--558. Springer, 2015.

\bibitem{FraigniaudL94}
Pierre Fraigniaud and Emmanuel Lazard.
\newblock Methods and problems of communication in usual networks.
\newblock {\em Discret. Appl. Math.}, 53(1-3):79--133, 1994.

\bibitem{GalcikKS17}
Frantisek Galc{\'{\i}}k, J{\'{a}}n Katrenic, and Gabriel Semanisin.
\newblock On computing an optimal semi-matching.
\newblock {\em Algorithmica}, 78(3):896--913, 2017.

\bibitem{GhaffariN16}
Mohsen Ghaffari and Calvin~C. Newport.
\newblock How to discreetly spread a rumor in a crowd.
\newblock In {\em {DISC}}, volume 9888 of {\em Lecture Notes in Computer
  Science}, pages 357--370. Springer, 2016.

\bibitem{HalldorssonKPR18}
Magn{\'{u}}s~M. Halld{\'{o}}rsson, Sven K{\"{o}}hler, Boaz Patt{-}Shamir, and
  Dror Rawitz.
\newblock Distributed backup placement in networks.
\newblock {\em Distributed Comput.}, 31(2):83--98, 2018.

\bibitem{HarveyLLT06}
Nicholas J.~A. Harvey, Richard~E. Ladner, L{\'{a}}szl{\'{o}} Lov{\'{a}}sz, and
  Tami Tamir.
\newblock Semi-matchings for bipartite graphs and load balancing.
\newblock {\em J. Algorithms}, 59(1):53--78, 2006.

\bibitem{HedetniemiHL88}
Sandra~Mitchell Hedetniemi, Stephen~T. Hedetniemi, and Arthur~L. Liestman.
\newblock A survey of gossiping and broadcasting in communication networks.
\newblock {\em Networks}, 18(4):319--349, 1988.

\bibitem{HolzerW12}
Stephan Holzer and Roger Wattenhofer.
\newblock Optimal distributed all pairs shortest paths and applications.
\newblock In {\em {PODC}}, pages 355--364. {ACM}, 2012.

\bibitem{Horn73}
W.~A. Horn.
\newblock Technical note - minimizing average flow time with parallel machines.
\newblock {\em Oper. Res.}, 21(3):846--847, 1973.

\bibitem{HosseiniLMMV90}
Seyed~H. Hosseini, Bruce~E. Litow, M.~I. Malkawi, J.~McPherson, and
  K.~Vairavan.
\newblock Analysis of a graph coloring based distributed load balancing
  algorithm.
\newblock {\em J. Parallel Distributed Comput.}, 10(2):160--166, 1990.

\bibitem{Hromkovic96}
Juraj Hromkovi{\v{c}}, Ralf Klasing, Burkhard Monien, and Regine Peine.
\newblock {\em Dissemination of Information in Interconnection Networks
  (Broadcasting {\&} Gossiping)}, pages 125--212.
\newblock Springer, Boston, MA, 1996.

\bibitem{Hyser2007}
Chris Hyser, Bret Mckee, Rob Gardner, and Brian~J Watson.
\newblock Autonomic virtual machine placement in the data center.
\newblock {\em Hewlett Packard Laboratories, Tech. Rep. HPL-2007-189}, 2007.

\bibitem{MPS20}
Uri Meir, Ami Paz, and Gregory Schwartzman.
\newblock Models of smoothing in dynamic networks, 2020.

\bibitem{NandanDPGS05}
Alok Nandan, Shirshanka Das, Giovanni Pau, Mario Gerla, and M.~Y. Sanadidi.
\newblock Co-operative downloading in vehicular ad-hoc wireless networks.
\newblock In {\em {WONS}}, pages 32--41. {IEEE} Computer Society, 2005.

\bibitem{PelegU86}
David Peleg and Eli Upfal.
\newblock The token distribution problem.
\newblock In {\em {FOCS}}, pages 418--427. {IEEE} Computer Society, 1986.

\bibitem{SiderC09}
Abderrahmane Sider and Rapha{\"{e}}l Couturier.
\newblock Fast load balancing with the most to least loaded policy in dynamic
  networks.
\newblock {\em J. Supercomput.}, 49(3):291--317, 2009.

\bibitem{SpielmanT04}
Daniel~A. Spielman and Shang{-}Hua Teng.
\newblock Smoothed analysis of algorithms: Why the simplex algorithm usually
  takes polynomial time.
\newblock {\em J. {ACM}}, 51(3):385--463, 2004.

\bibitem{SpielmanT09}
Daniel~A. Spielman and Shang{-}Hua Teng.
\newblock Smoothed analysis: an attempt to explain the behavior of algorithms
  in practice.
\newblock {\em Commun. {ACM}}, 52(10):76--84, 2009.

\bibitem{VicsekEA95}
Tam\'as Vicsek, Andr\'as Czir\'ok, Eshel Ben-Jacob, Inon Cohen, and Ofer
  Shochet.
\newblock Novel type of phase transition in a system of self-driven particles.
\newblock {\em Phys. Rev. Lett.}, 75:1226--1229, 1995.

\bibitem{XuL92}
Cheng{-}Zhong Xu and Francis C.~M. Lau.
\newblock Anlaysis of the generalized dimension exchange method for dynamic
  load balancing.
\newblock {\em J. Parallel Distributed Comput.}, 16(4):385--393, 1992.

\bibitem{XuL93}
Cheng{-}Zhong Xu and Francis C.~M. Lau.
\newblock Optimal parameters for load balancing using the diffusion method in
  k-ary n-cube networks.
\newblock {\em Inf. Process. Lett.}, 47(4):181--187, 1993.

\bibitem{XuL94}
Cheng{-}Zhong Xu and Francis C.~M. Lau.
\newblock Optimal parameters for load balancing with the diffusion method in
  mesh networks.
\newblock {\em Parallel Process. Lett.}, 4:139--147, 1994.

\end{thebibliography}

\appendix
\section{Reducing a gap without knowing it}
\label{sec:balancing alg variant without knowing the gap}
We show an alternative realization of \alggapreduce{} that eliminates the need to globally defuse information through the network at the beginning (specifically the minimum and maximum loads). We show that one can simply guess $\psi$ and circumvent the notion of light and heavy nodes, incurring a slowdown of factor $\ln n$.

Indeed, the new algorithm is parametrized with $\psi$ and does the following: every node $u$ proposes to its heaviest neighbor $v$, as long as $w(v)\geq w(u)+\psi/2$.
As before, a node that receives at least one proposal, accepts the one from the lightest neighbor.
The main difference is that now balancing can be made between light and balanced nodes, as well as heavy and balanced nodes. 
Nevertheless, in each balancing step at least one heavy or light node is involved, and more importantly: the outcome is two balanced nodes. 
Thus, the sets $L_r$ and $H_r$ do shrink over time, but not necessarily both with every balancing step. 
We define a potential function for round $r$: $f(r) = \size{L_r} + \size{H_r}$, and we show that $f(r)$ drops between different phases. 
We are done once one of the sets is empty (regardless of the remaining potential).

The set $S$ of good potential edges (with gap at least $\psi/2$) in each phase still holds some guarantees, as long as we are not yet finished:
\[
\size{S}
\geq \size{L_r} \cdot \size{H_r} 
\geq \size{L_r} + \size{H_r} - 1.
\]
To see the central inequality, note that for any two numbers $a,b \geq 1$, $(a-1)(b-1)\geq 0$ and so $a\cdot b \geq a + b -1$.

Assume that some phase $j$ starts at round $r$ with $f(r) = j$; we analyze how many rounds are needed in order to reduce the potential to $j-1$ w.h.p.
Whenever the potential is $j$, we now have the weaker guarantee $\size{S} \geq j$ (compared to the guarantee $\size{S} \geq j^2$ we had in the main version).
Over all phases, noting that the potential is always $j \leq n$, we now have
\begin{align*}
    \sum_{i=1}^{n} t_i 
    &=\sum_{j=1}^{n} \frac{2n^2 \ln (n \ln T)}{c_1 k \cdot j} \\
    &=\frac{2n^2 \ln (n \ln T)}{c_1 k} \sum_{j=1}^{n} \frac{1}{j} \leq \frac{2n^2 \ln (n \ln T)}{c_1 k}  \ln n,
\end{align*}
with the last inequality relies on the harmonic number.

\medskip
As before, load balancing is done via multiple executions of \alggapreduce{} (using the new variant instead).
Unlike before, each such execution is now parametrized with $\psi$, to dictate nodes to balance with neighbors only when the load gap is at least $\psi/2$.
We therefore run the variant with decreasing values of $\psi$, starting from $\psi = T$, then $\psi = 3T/4$, and all the way down to $\tau < \psi < 4/3\tau$, after which the network is $\tau$-converged.
As before, each iteration is guaranteed to sufficiently decrease the gap with probability at least $1-1/(n\log^2 T)$, and after $O(\log T)$ such iterations, the network is $\tau$-converged with probability $1-1/n$.
\end{document}